\title{\name: A Programmable Framework for Semantically Constraining the Output of Language Models}
\lstdefinelanguage{haskell}{
  sensitive=true,
  morecomment=[l]--,                
  morecomment=[s]{\{-}{-\}},        
  morestring=[b]",                  
  alsoletter={:},                   
  alsoletter={-},                   
  morekeywords=[1]{
    match, with,if,then,else,
    let,in,data,type,newtype,class,instance,
    where,do,import,qualified,hiding,as,
    deriving,default,infix,infixl,infixr,
    module,
  },
  otherkeywords={
    ::, =, ->, =>, <-, =, |, &, \\, -- 
  }
}
\slshape\color{BrickRed!80!black},
\slshape\color{black!70},
\algnewcommand{\LeftComment}[1]{\Statex \(\triangleright\) #1}
\newcommand{\Omit}[1]{}
\newcommand{\changed}[1]{{{{#1}}}}
\theoremstyle{definition}
\newcommand{\emphbf}[1]{\textbf{\emph{#1}}}
\newcommand{\code}[1]{\lstinline{#1}}
\newcommand{\mcode}[1]{\text{\texttt{#1}}}
\newcommand{\abs}[1]{|#1|}
\newcommand{\rone}{(\emph{i})~}
\newcommand{\rtwo}{(\emph{ii})~}
\newcommand{\rthree}{(\emph{iii})~}
\def\bool{\mathrm{bool}}
\def\int{\mathrm{int}}
\def\false{\mathrm{false}}
\newcommand{\llangle}{\langle\!\langle}
\newcommand{\rrangle}{\rangle\!\rangle}
\newcommand{\semlexerstateprog}[1]{\llangle{#1}\rrangle^{\mathsf{lexerstate}}}
\newcommand{\Tau}{\mathbf{T}}
\newcommand{\name}{\textsc{ChopChop}\xspace}
\def\toolname{\name}
\def\token{\tau}
\def\tokens{\Tau}
\def\stringConstraint{\Phi}
\def\ASTConstraint{\phi}
\def\st{^*}
\def\alphabet{\Sigma}
\def\AST{\mathsf{AST}}
\def\prefixSpace{\mathsf{PrefixSpace}}
\def\llm{\mathsf{lm}}
\def\wl{\mathsf{worklist}}
\def\lexset{\mathsf{partial\_lex}}
\def\realchecker{\mathsf{check\_realizable}}
\def\monitor{\mathsf{monitor}}
\def\realizable{\mathsf{realizable}}
\def\completable{\mathsf{completable}}
\def\dequeue{\mathsf{dequeue}}
\def\enqueue{\mathsf{enqueue}}
\def\endtok{\ensuremath{\mathsf{END}}}
\def\parse{\mathsf{parse}}
\def\lexerstate{\tilde{L}}
\def\growlexerstate{\mathsf{compute\_lexer\_state}}
\def\rmvpointers{\mathsf{remove\_annotations}}
\def\rmvignores{\mathsf{remove\_ignorable\_tokens}}
\def\lex{\mathsf{lex}}
\def\extendlexes{\mathsf{extend\_lexer\_state}}
\def\munch{\mathsf{remove\_nonmaximal\_munch\_lexes}}
\def\pltoken{\alpha}
\def\typerestrict{\code{type_prune}\xspace}
\author{Shaan Nagy}
\affiliation{
\institution{University of California San Diego}
\country{USA}
}
\author{Timothy Zhou}
\affiliation{
\institution{University of California San Diego}
\country{USA}
}
\author{Nadia Polikarpova}
\affiliation{
\institution{University of California San Diego}
\country{USA}
}
\author{Loris D'Antoni}
\affiliation{
\institution{University of California San Diego}
\country{USA}
}
\begin{abstract}
Language models (LMs) can generate code but cannot guarantee its correctness---often producing outputs that violate type safety, program invariants, or other semantic properties. Constrained decoding offers a solution by restricting generation to only produce programs that satisfy user-defined properties. However, existing methods are either limited to syntactic constraints or rely on brittle, ad hoc encodings of semantic properties over token sequences rather than program structure.

We present \name, the first programmable framework for constraining the output of LMs with respect to semantic properties. \name introduces a principled way to construct constrained decoders based on analyzing the space of programs a prefix represents. It formulates this analysis as a realizability problem which is solved via coinduction, connecting token-level generation with structural reasoning over programs. We demonstrate \name's generality by using it to enforce (1) equivalence to a reference program and (2) type safety. Across a range of models and tasks, \name improves success rates while maintaining practical decoding latency.
\end{abstract}
\begin{document}

\begin{CCSXML}
<ccs2012>
   <concept>
       <concept_id>10011007.10011006.10011008</concept_id>
       <concept_desc>Software and its engineering~General programming languages</concept_desc>
       <concept_significance>500</concept_significance>
       </concept>
   <concept>
       <concept_id>10010147.10010257</concept_id>
       <concept_desc>Computing methodologies~Machine learning</concept_desc>
       <concept_significance>500</concept_significance>
       </concept>
   <concept>
       <concept_id>10003752.10010124</concept_id>
       <concept_desc>Theory of computation~Semantics and reasoning</concept_desc>
       <concept_significance>500</concept_significance>
       </concept>
 </ccs2012>
\end{CCSXML}

\ccsdesc[500]{Software and its engineering~General programming languages}
\ccsdesc[500]{Computing methodologies~Machine learning}
\ccsdesc[500]{Theory of computation~Semantics and reasoning}

\keywords{LLM, Semantic Constrained Decoding, Coinduction}

\maketitle

\section{Introduction}
\label{sec:introduction}
Language models (LMs) have fundamentally transformed how we interact with code---generating functions, completing boilerplate, and even suggesting entire programs.
Yet despite their success, LMs offer no guarantees of correctness: they 
often produce code that looks plausible but violates critical syntactic or semantic properties.
This limitation arises because LMs generate text in a purely probabilistic fashion---at each step, they sample the next token from a distribution conditioned on the sequence they have produced so far.

\textit{Constrained decoding} addresses this problem by enforcing constraints on the sequence an LM generates~\cite{wang2023grammar,willard2023efficient,geng2023grammarconstrained,monitors,typespaper}: a constrained decoder modifies the sampling process to forbid selecting next tokens that would violate the constraint.
Concretely, a constrained decoder performs a \textit{completability analysis} at each step to determine whether the partially generated sequence can still be extended into a completion that satisfies the constraints.
If not, the LM backtracks and tries another token.
By construction, any output a constrained decoding algorithm returns satisfies the user-provided constraints.

Existing constrained-decoding techniques are limited in either \emph{expressivity} (the complexity of the constraints they can enforce) or \emph{programmability} (the ease with which users can specify constraints).

\textbf{Programmable but not Expressive.} Most of the constrained-decoding literature is interested in the problem of forcing the LM to only produce sequences that are accepted by a user-given context-free grammar (CFG)~\cite{greatgramma,gad,syncode,llguidance}.
While CFGs provide a programmable interface for specifying \textit{syntactic} constraints, many desirable \emph{semantic} properties---such as type safety or adherence to invariants---cannot be expressed as context-free languages.

\textbf{Expressive but not Programmable.} 
%
\citet{monitors} proposed an automaton-based framework (which they call monitors) to connect string-based static analyses (e.g., those offered by a Language Server Protocol) to a constrained decoder.
\citet{typespaper} implement a monitor to enforce well-typedness of TypeScript code.
%
Although monitors are expressive, they operate on the string representation of a program.
Because semantic constraints are usually defined over abstract syntax trees, it is nontrivial (and often unnatural) to translate semantic constraints into a corresponding completability analysis over strings.
For example, while an expression like \code{1 + myString} would likely be considered ill-typed as a \textit{complete} program, as an \textit{incomplete} program it may still admit a valid continuation (e.g., \code{1 + myString.length}).
Furthermore, because monitors are not defined over abstract syntax, they must manually handle syntax-level concerns---lexing, precedence, left recursion, and so on---failing to separate syntactic and semantic constraints.
%
By operating over concrete syntax, monitors sacrifice \emph{programmability}
and offer no clear path for enforcing program equivalence or other complex semantic invariants.

In this paper, we ask:
\begin{center}
\emph{Can we design a framework for constrained decoding that is both \textbf{expressive} and \textbf{programmable}?}
\end{center}





\paragraph{Our approach.}
We present \name, a framework that enables programmable constrained decoding.
The key novelty of \name is that it allows one to constrain the abstract syntax trees corresponding to the output sequences produced by an LM.

Conceptually, given a semantic constraint over abstract syntax trees, \name has to decide whether a sequence of tokens in the concrete syntax (i.e., the output of an LM) can be completed so that parsing it will produce an abstract syntax tree that satisfies the given constraint.
Given a sequence of tokens, \name solves the above problem in two steps: 
\rone it computes a representation of 
the set of ASTs (a \textit{program space}) producible by completing the token sequence with some suffix string---and
\rtwo it checks whether any AST in this space satisfies the semantic constraint.

More concretely, to use \name, users provide two inputs:
\rone a \emph{grammar} defining the concrete syntax of the language and how abstract syntax trees can be extracted from sequences in the concrete syntax, and
\rtwo a set of \emph{semantic pruners} which describe how to prune constraint-violating ASTs from a program space.
Decoupling the semantic constraint from concrete program syntax separates concerns, allowing for an interface that is easy to program.

The key programming-language insight that enables \name's pipeline to work despite having to deal with infinite sets of ASTs is the following: program spaces can be represented as \emph{regular codata} and semantic pruners can be implemented via \emph{corecursion} in the style of~\cite{cocaml}.


%

\paragraph{Applications.}
We demonstrate the generality of \name through two diverse applications:
\begin{itemize}
    \item \textbf{Program equivalence-guided decoding:} We constrain a language model to generate programs equivalent, modulo term rewriting, to a reference program.
    Here, the semantic pruner uses e-graphs~\cite{egg} as program spaces to efficiently represent all programs that are equivalent to the reference program.

    \item \textbf{Type-safe decoding:} We constrain a language model to emit only well-typed programs in a subset of TypeScript.
    Here, the semantic pruner lifts type checking to operate over co-inductively defined sets of abstract syntax trees---i.e., the pruner constrains away possible program completions that will not typecheck.
\end{itemize}

For these domains, \name produces programs that satisfy given semantic constraints more often than both unconstrained language models and approaches that only constrain the program syntax.

\paragraph{Contributions.}
\begin{itemize}
    \item The first \textit{programmable framework for semantic constrained decoding} (\Cref{sec:overview}).
    \item A \textit{formal connection between semantic constrained decoding and realizability} in program synthesis, showing that enforcing semantic properties during generation reduces to checking whether the set of possible completions contains a valid program (\Cref{sec:sem-constraint}).
    \item A \textit{novel constrained decoding algorithm based on corecursion}; the algorithm incrementally builds and analyzes a coinductive representation of the space of programs obtainable from a prefix (\Cref{sec:corecursion}).
    \item An \textit{implementation of our framework} in a tool, \name, which we instantiate
    in two challenging domains---equivalence-guided decoding and type-safe generation---demonstrating the generality of our approach (\Cref{sec:case-studies}).
    \item An \textit{evaluation} of \name on the two domains---\name produces programs that satisfy given semantic constraints more often than both unconstrained and purely syntax-constrained baselines while incurring a small computational overhead (\Cref{sec:eval}).
\end{itemize}

\Cref{sec:related_work} discusses related work, and \Cref{sec:conclusion} concludes.
\section{Overview of \name} \label{sec:overview}
We illustrate our approach with the following simple prompt one might task an LM with:
\begin{align*}
\textit{Generate a sum of odd integers whose total is even.}
\end{align*}

\begin{wrapfigure}{r}{0.37\linewidth}
    \vspace{-6.75mm}
    \centering
\begin{minipage}{\linewidth}
        \small    
\[
\begin{array}{rcl@{\hspace{1em}}l}
\text{\textsc{int}} &::=& {[0\text{-}9]+} & \\[2pt]
E &::=& \textsc{int} & \{ \text{Lit \$1} \} \\[2pt]
  &\mid& E~\texttt{"+"}~\textsc{int} & \{ \text{Sum \$1 \$3} \}
\end{array}
\]    
\end{minipage}
    \vspace{-2mm}
    \caption{
        A left-recursive grammar for the language of integer sums.
        Right-hand annotations specify parse tree-to-AST translation (e.g., $\{\text{Sum \$1 \$3}\}$ applies the Sum constructor to the first and third arguments of the second production).
    }
    \vspace{-14mm}
    \label{fig:attribute_grammar}
\end{wrapfigure}
For example, the expression \code{5+7} is a valid solution, since 12 is even and both 5 and 7 are odd.
In general, there is no \textit{guarantee} that an LM will produce a valid response---it could generate a sum with even numbers (e.g., \code{2+2}), an odd total (e.g., \code{1+1+1}), or totally nonsensical output (e.g., \code{banana}).

A user of \toolname can enforce the constraints mentioned in the prompt by providing two inputs:
\begin{enumerate}
    \item 
    \parbox[t]{0.574\linewidth}{
    A \textit{grammar} (\Cref{fig:attribute_grammar}) defining the syntax of the
    language, annotated with rules for translating parse trees to ASTs.}
    \vspace{1mm}
    \item A set of \emph{semantic pruners} (\Cref{fig:example_checker_odds_only,fig:example_checker_even_sum}) each representing a constraint on ASTs.
    At a high-level, a pruner is a function that takes in a representation of a space of programs and returns (an approximation of) the subspace of constraint-satisfying programs.
\end{enumerate}

Given these inputs, \toolname~guarantees that any generated program is syntactically valid and satisfies the provided semantic constraints---in our case, the output being a sum of odd integers with an even total.\footnote{While the constraints being enforced in this example are task-specific, in practice users would write pruners for general constraints that apply across all tasks in an application domain (as in \Cref{sec:case-studies}).} 
Crucially, \toolname enforces constraints \textit{during generation}, rather than waiting for the LM to produce a complete program and then retrying from scratch upon failure.

\subsection{Semantic Constrained Decoding as Realizability}
Returning to our example, suppose the LM has already generated the prefix \code{1+2}, and the LM's top two choices for the next token are \code{+} and \code{2}.
\toolname disallows \code{+} since any program starting with \code{1+2+} will contain an even integer.
However, it allows the next token \code{2} as it can still lead to a valid completion (for example, \code{1+221}).

To rule out the token \code{+}, \toolname must determine that no completion of the sequence \code{1+2+} parses to an AST satisfying our example constraints.
We say that \toolname solves \emph{semantic constrained decoding} because it prunes the LM's choices based on semantic constraints over ASTs (as opposed to \textit{syntactic constrained decoding} which enforces shallow syntactic properties of the token stream).
It is challenging to prune the LM's choices directly because constraints are expressed at the \textit{AST} level, while \code{1+2+} is an incomplete \textit{string} prefix.

Our first key insight is that \textbf{a prefix represents the space of programs corresponding to its completions.}
Instead of attempting to reason about AST-level properties on \textit{string} completions of \code{1+2}, \toolname reasons about the \textit{space of ASTs} that this prefix represents.
In the synthesis literature, checking satisfiability of constraints over spaces of ASTs is studied under the name \emph{realizability}~\cite{unrealizability2019}.
The next section outlines our realizability-based approach to semantic constrained decoding in more detail.

\subsection{Realizability as Analysis over Regular Codata}\label{sec:overview:realizability}
Next, we show how to formulate and solve the realizability problem to which completability of the prefix \code{1+2} reduces. We split this problem into three goals:
\begin{enumerate}
    \item \textbf{Representation:} How can we finitely describe infinite program spaces?
    \item \textbf{Completion:} Given a string $\omega$, how can we construct the space of all ASTs corresponding to completions of $\omega$?
    \item \textbf{Analysis:} Does there exist a program in the constructed space satisfying the constraint?
\end{enumerate}
Our next key insight is that \textbf{regular codata provides a structured abstraction for reasoning about spaces of programs}. Our solutions to (1)-(3) are built around this idea.
In the rest of the section, we describe our approach following the overview given in \Cref{fig:toolname-flow}.
\tikzstyle{box} = [
  rectangle, rounded corners=2pt, fill=white!95,
  minimum height=1.1cm,
  align=left, font=\small
]
\tikzstyle{arrow} = [thick, -{Stealth}]
\tikzstyle{title} = [font=\bfseries\small]

\begin{figure}[t]
\centering
\resizebox{\textwidth}{!}{%
\begin{tikzpicture}[node distance=1.2cm and 1.5cm]

\node[box] (prefix) {
  \textbf{Prefix} \\
  \code{1+2}
};

\node[box, right=of prefix] (lexer) {
  \textbf{Lexical Prefixes} \\
  \code{["1","+","2"]} \\
  {\code{["1","+","2[0-9]+"]}}
};

\node[box, right=of lexer] (parser) {
  \textbf{Program Space} \\
  \code{S = Union [} \\
  \quad\ \code{Sum (Lit "1") (Lit "2"),} \\
  \quad\ \code{ Sum (Lit "1") (Lit "2[0-9]+"),} \\
  \quad\ {\code{ Sum S (Lit "[0-9]+"})} \\
  ]
};
\node[box, below=of parser] (pruned) {
  \textbf{Pruned Space} \\
  \code{S' = Union [} \\
  \quad\ \code{Empty,} \\
  \quad\ {\code{ Sum (Lit "1") (Lit "2[0-9]*[13579]"),}} \\
  \quad\ {\code{Sum (odd_sum (odds S)) (Lit "[0-9]*[13579]")}} \\
  ]
};

\node[box, left=0.5cm of pruned] (nonempty) {
  \code{nonempty S'}
};

\node[box, above=0.4cm of nonempty, xshift=-2cm] (continue) {
  \textbf{Continue Decoding}
};
\node[box, below=0.4cm of nonempty, xshift=-2cm] (discard) {
  \textbf{Discard Prefix}
};

\draw[arrow] (prefix) -- node[title, above] {Lexer} (lexer);
\draw[arrow] (lexer) -- node[title, above] {Parser} (parser);
\draw[arrow] (parser) -- node[title, right, xshift=2pt, align=center] {
  Apply Pruners: \\ \normalfont{\code{even_sum (odds S)}}
} (pruned);
\draw[arrow] (pruned) -- (nonempty);
\draw[arrow] (nonempty) -- node[right] {\code{true}} (continue);
\draw[arrow] (nonempty) -- node[right] {\code{false}} (discard);

\end{tikzpicture}
}
\caption{
Flow of \toolname on input prefix \code{1+2}. The prefix is lexed into possible lexical sequences, parsed into a symbolic program space, semantically pruned, and checked for nonemptiness to determine realizability. If realizble, the prefix may be extended. If unrealizable, the prefix is discarded.
}
\label{fig:toolname-flow}
\end{figure}

\paragraph{Goal 1: Representing Infinite Program Spaces}
In our example, programs consist solely of integer literals and sums.
To represent an AST for a single program, we can use the following datatype:
\begin{lstlisting}
data Expr = Lit String      -- a numeric literal, e.g., Lit "5"
          | Sum Expr Expr   -- a sum of two expressions
\end{lstlisting}
We would like to lift this definition to represent (possibly infinite) \textit{spaces} of ASTs.
The first step is to extend the signature of \code{Expr} as follows:
\begin{lstlisting}
data ExprSpace = Empty                     
               | Union [ExprSpace]        
               | Lit Regex                
               | Sum ExprSpace ExprSpace 
\end{lstlisting}
An \code{ExprSpace} can be empty,
a union of spaces,
a regular expression that represents a \textit{set} of concrete literals,
or an application of the \code{Sum} constructor lifted to the product of two program spaces.
Readers familiar with \emph{version space algebras}~\cite{VSA2003,FlashMeta2015}
will recognize this representation as a version space, where \code{Union} is the union node and \code{Sum} is a join node.

For spaces to represent infinite sets, we must generalize the shape of terms and allow corecursion.
We represent program spaces as \emph{regular coterms}, i.e., infinite, cyclic terms with a finite description.
For example, the space consisting of \textit{all} sums of integers can be represented using the following corecursive definition:
\begin{lstlisting}
all = 
    let int = Lit "[0-9]+" in 
    Union [int, Sum all int]
\end{lstlisting}
Note that even though \code{all} is self-recursive, it has a \textit{finite} representation in memory as a cyclic term in which the definition of \code{all} contains a back-reference to itself.
Coterms that admit a finite representation (but are potentially infinite) are called regular.

In \name, all program spaces are regular and represented using this finite, cyclic form (\Cref{sec:corecursion}).
We implement and manipulate coterms via a solver inspired by CoCaml~\cite{cocaml}, an extension to OCaml that supports equational reasoning for regular coterms.
This solver can perform computations over program spaces---e.g., applying transformations such as \code{odds} and \code{even_sum} (\Cref{fig:running_example_inputs})---without materializing infinite sets.

\paragraph{Goal 2: Computing Prefix Spaces}
Given a string prefix $\omega$, we need to compute the space of all ASTs that can be produced by parsing any completion of $\omega$.
This construction is done with reference to a user-supplied parser definition that defines a grammar for the language as well as how to convert parse trees to ASTs (\Cref{fig:attribute_grammar}).

As a first step, we lex $\omega$ into a finite set of \emph{lexical prefixes} that encode how $\omega$ can be lexed once its last lexeme has been completed.
For example, the prefix $\omega = \text{1+2}$ has two valid lexical prefixes, representing the different ways $\omega$ could be lexed depending on the next character:
\begin{itemize}
    \item \lstinline{["1","+","2"]}: where the next character is non-numeric (as in \code{1+2+}).
    \item \lstinline{["1","+","2[0-9]+"]}: where the next character continues the numeric literal (as in \code{1+21}).
\end{itemize}

Next, for each lexical prefix we construct a corresponding program space representing its completions.
This construction can be done naturally following the \textit{parsing with derivatives} methodology~\cite{parsingWithDerivatives}.
In our framework, parsers are regular coterms representing the set of possible parse trees producible from their current state%
\footnote{We use the term \emph{lexeme} for programming-language tokens to avoid confusion with LM tokens.}.
Cycles in parsers directly mirror cycles in the underlying context-free grammar.
Critically, parsers support a \emph{derivative} operation that advances the state of a parser as if it read one additional lexeme.\footnote{This approach mirrors the definition of the Brzozowski derivative ~\cite{Brzozowski} of a language $L$ with respect to characer $u$, which is $u^{-1}L = \{w \mid uw \in L\}$.}

For each lexical prefix---e.g., \code{["1","+","2"]}---%
we start from the user-provided parser and apply successive derivatives for each lexeme in the prefix;
this results in a parser that accepts exactly those programs that begin with the given lexical prefix.
Finally, we convert each derived parser into a corresponding program space (essentially by discarding the information about concrete syntax) and combine the program spaces from different lexical prefixes using the \code{Union} constructor.
For our example, the prefixes \code{["1","+","2"]} and \code{["1","+","2[0-9]+"]} together would induce the following program space:

\begin{minipage}{\linewidth}
\begin{lstlisting}
S = Union [ Sum (Lit "1") (Lit "2"), 
            Sum (Lit "1") (Lit "2[0-9]+"), 
            Sum S (Lit "[0-9]+") ]
\end{lstlisting}
\end{minipage}




   \begin{figure}[t]
    \centering
    \begin{subfigure}[b]{\textwidth}
        \centering
        \begin{lstlisting}[xleftmargin=0em]
odds :: ExprSpace -> ExprSpace
odds Empty            = Empty
odds (Union children) = Union (map odds children)
odds (Lit regex)      = 
    Lit (regex `intersect` "[0-9]*[13579]") -- only odd
odds (Sum left right) = Sum (odds left) (odds right)
        \end{lstlisting}
        \vspace{-4mm}
    \caption{\texttt{odds} pruner: retains only programs using odd literals.}   \label{fig:example_checker_odds_only}
    \end{subfigure}
    \hfill
    \begin{subfigure}[b]{\textwidth}
        \begin{lstlisting}[xleftmargin=0em]
even_sum :: ExprSpace -> ExprSpace
even_sum Empty            = Empty
even_sum (Union children) = Union (map even_sum children)
even_sum (Lit regex)      = 
    Lit (intersect regex "[0-9]*[02468]") -- only even
even_sum (Sum left right) = 
    Union [ Sum (even_sum left) (even_sum right),  
            Sum (odd_sum left) (odd_sum right) ]

odd_sum :: ExprSpace -> ExprSpace
-- Analogous to even_sum
        \end{lstlisting}
\vspace{-4mm}
        \caption{\code{even_sum} pruner: retains programs whose total evaluates to an even number.
        }
        \label{fig:example_checker_even_sum}
    \end{subfigure}
    \caption{Example semantic pruners.}
    \label{fig:running_example_inputs}
\end{figure}

\paragraph{Goal 3: Analyzing Program Spaces}
The final step towards evaluating realizability is to determine whether there exists a constraint-satisfying program in the prefix space computed in the previous step.
%
Unfortunately, this problem is undecidable in general, even when the constraint is decidable for individual ASTs~\cite{popl23unrealizabilitylogic,kim2025}.
%
To make the analysis tractable, we therefore turn to the canonical strategy in program synthesis: pruning.
Specifically, users encode their constraints by defining \textit{semantic pruners}, functions that take as input a space of programs and return (an approximation of) the subspace of constraint-satisfying ones.
In \toolname, just as program spaces are modeled as codata, pruners are modeled as corecursive functions.

Pruners can be composed to enforce conjunctions of constraints,
allowing users to modularly define and reuse semantic constraints across tasks.
For our running example, we supply two pruners (shown in \Cref{fig:running_example_inputs}): \code{odds}, which removes programs with even literals, and \code{even_sum}, which removes programs whose total is odd.
When we apply the two pruners in sequence, we obtain a new regular coterm \code{S'} that represents the subspace of even sums of odd literals:
\begin{lstlisting}
    S' = even_sum (odds S)
\end{lstlisting}
Since \code{S} is cyclic, evaluating \code{S' = even_sum (odds S)} eventually leads back to \code{even_sum (odds S)}. To avoid infinite recursion, our solver tracks previously expanded terms and creates a cycle in \code{S'} when we re-visit a subterm.
For example, the inner term \code{odds S} expands until we reach the state:
\begin{lstlisting}
    odds S = Union [ Sum (Lit "1") Empty,
                     Sum (Lit "1") (Lit "2[0-9]*[13579]"),
                     Sum (odds S) (Lit "[0-9]*[13579]") ]
\end{lstlisting}
%

Finally, we must check if there exists a program in the pruned space.
Note that checking emptiness is not the same as checking if the space is exactly the term \code{Empty},
since the presence of cyclic subterms means evaluation may not reduce the space to a normal form. 
\toolname implements a function \code{nonempty :: ExprSpace -> Bool}
that performs a fixpoint computation over the codata representation of its input.
Again, because spaces are \textit{regular} coterms, the fixpoint computation operates over a finite structure despite the fact that a space can represent infinitely many programs.

In our example, \code{nonempty} determines that the first two sub-spaces of the union in \code{S'} are empty
while the third one is non-empty, and hence the union is non-empty.

\paragraph{Putting it All Together}
Together, the four components---lexing, parser derivatives, user-defined pruners, and the nonemptiness check---%
can compute realizability for our running example as:
\begin{lstlisting}
    nonempty (even_sum (odds S))
\end{lstlisting}
%
This pipeline is modular and requires minimal effort from the user:
they define a grammar and supply pruners for each constraint.
\toolname takes care of the rest---automatically enforcing semantic constraints as the LM generates code.

\section{Semantic Constrained Decoding as Realizability Checking}\label{sec:sem-constraint}
In this section, we formalize the problem of semantic constrained decoding and explain how it is equivalent to the problem of checking whether a synthesis problem is realizable---i.e., whether a regular tree grammar accepts any programs that satisfy a given input specification. 
After \Cref{sec:prelim} establishes some basic definitions,
\Cref{sec:constraint-decoding} gives a general formulation of constrained decoding where the constraint is a predicate on strings. 
\Cref{sec:semantic-constrained-decoding} defines the special case of semantic constrained decoding, where the predicate is induced by a constraint on ASTs.
We also provide soundness and completeness theorems.

\subsection{Preliminaries}
\label{sec:prelim}

\paragraph{Strings and Abstract Syntax Trees}

Let $\alphabet$ be a finite alphabet of characters. A \emph{string} is a sequence $\omega \in \alphabet\st$, and the \emph{completions} of $\omega$ are the strings that begin with $\omega$. Let $\AST$ be a set of \emph{abstract syntax trees} (of programs). 
Translation from strings to ASTs is defined by a function $\parse\colon \alphabet^* \rightarrow \AST \cup \{\bot\}$ that produces an AST from a string of characters (or $\bot$ if parsing fails).
\paragraph{Language Models and Decoding}

Language Models (LMs) operate over short character sequences called \emph{tokens}, rather than over individual characters.
We define a \emph{token vocabulary} $\tokens$, where a token $\tau \in \tokens$ is either a finite string of characters from $\alphabet$ or the special token $\endtok$. A token vocabulary $\tokens$ must be large enough to represent any string in $\alphabet\st$ as a sequence of tokens.

We formalize a \emph{language model} $\llm\colon \tokens\st \to \tokens \to [0,1]$ as a function that takes in a \emph{prefix} $\token_1, \ldots, \token_{i-1}\in\tokens\st$
and returns a probability distribution over the next token $\token_i\in\tokens$.
Given a language model, we can iteratively generate a sequence of tokens,
selecting each next token $\token_i$ based on the model's distribution, conditioned on the previously generated tokens $\token_1, \ldots, \token_{i-1}$. This process is called \emph{decoding}.
Decoding can be done greedily (by choosing the most probable token at every step)
or stochastically (e.g., via nucleus sampling \cite{holtzman2020}).
Decoding stops when an $\endtok$-terminated token sequence is produced.
%

\subsection{Constrained Decoding}
\label{sec:constraint-decoding}
We begin with a general definition of \textit{constrained decoding} (CD) that is not specific to semantic constraints.
A \emph{string constraint} $\stringConstraint$ is a predicate on $\alphabet\st$ that accepts some set of strings. For example, $\stringConstraint$ may accept the members of some context-free language. We say that a string $\omega \in \alphabet\st$ is \textit{completable} with respect to $\stringConstraint$ if there exists a completion of $\omega$ satisfying $\stringConstraint$:
$$\completable(\omega, \stringConstraint) = \exists \omega' \in \alphabet^*.~\stringConstraint(\omega\omega')$$
A \textit{monitor} for a string constraint $\stringConstraint$ is a function $\monitor_\Phi : \Sigma^* \to \bool$ that checks whether an input string is completable with respect to $\stringConstraint$.
For a given string constraint $\stringConstraint$,
\textit{constrained decoding} maintains the invariant that the currently generated sequence of tokens is completable with respect to a monitor for $\stringConstraint$.
Given a language model $\llm$, a string constraint $\stringConstraint$, and a monitor for $\stringConstraint$, \Cref{alg:constrained-decoding} outlines how constrained decoding operates in its general form.

\renewcommand{\algorithmicindent}{0.7em}
\algdef{C}[IF]{IF}{Elif}[1]{\textbf{elif}\ #1\ \algorithmicthen}

\begin{figure*}
 \begin{minipage}[t]{0.75\textwidth}
 \hspace{-7em}
\begin{algorithm}[H]
\caption{Constrained Decoding}
\begin{algorithmic}[1] \label{alg:constrained-decoding}
\State \textbf{Input:} LM $\llm$, {\color{ForestGreen}string constraint $\stringConstraint$, monitor $\monitor_\stringConstraint$}
\State \textbf{Output:} String $\omega$ {\color{ForestGreen} such that $\stringConstraint(\omega)$ holds}

\State $\wl \gets [(\epsilon, 1)]$
\While{$\wl \neq []$}
    \State $\omega, p \gets \wl.\dequeue()$
    \If{$\omega[-1] = \endtok  {\color{ForestGreen}~\land~\stringConstraint(\omega[:-1])}$}
        \State \Return{$\omega[:-1]$}
    \Elif{$\omega[-1] \neq \endtok  {\color{ForestGreen}~\land~\monitor_\stringConstraint(\omega)}$}
        \For{$\tau \in \tokens$}
            \State $\wl.\enqueue(\omega\tau, p * \llm(\omega, \tau))$
        \EndFor
    \EndIf
\EndWhile
\State \Return{$\bot$}
\end{algorithmic}
       \label{fig:parsingmodules}
\end{algorithm}
\end{minipage}

\caption{Constrained decoding maintains a worklist of pairs of form $(\omega, p)$, where $\omega$ is a token sequence and $p$ is its probability of being a prefix of a string drawn from the distribution induced by $\llm$. At each iteration, a string $\omega$ is dequeued (Line 5). If the string is $\endtok$-terminated, we remove $\endtok$ and return the string if it satisfies $\stringConstraint$ (Lines 6-7). If the string is not $\endtok$-terminated and is completable with respect to $\stringConstraint$, we enqueue its one-token continuations for later review (Lines 8-10). Unconstrained decoding corresponds to removing the {\color{ForestGreen}green} parts of the algorithm.\label{fig:decoding}
}
\end{figure*}

To abstract over the choice of the decoding strategy,
\Cref{fig:decoding} uses an opaque $\wl$ data structure that stores all generated token sequences whose proper prefixes are completable, along with the sequences' generation probabilities.
In each iteration, the algorithm dequeues a sequence $\omega$ from the worklist. (Note that for sampling-based techniques dequeuing may be a probabilistic operation.)
If $\omega$ is $\endtok$-terminated, we return $\omega[:-1]$ if $\omega[:-1]$ satisfies $\stringConstraint$ and discard $\omega$ otherwise (lines 6-7). If $\omega$ is not $\endtok$-terminated, we check whether any completion of $\omega$ can satisfy $\stringConstraint$ (line 8). If yes, we enqueue the possible continuations $\omega\tau$ for later review (lines 9-10). If no, we discard $\omega$.



\subsubsection{Soundness and Relative Completeness of Constrained Decoding}
\label{sec:soundness-completeness}
The guarantees we can make about \Cref{alg:constrained-decoding} depend on the quality of $\monitor_\stringConstraint$.
A monitor is \rone \emph{sound} if every prefix the monitor accepts is completable:
        $\monitor_\stringConstraint(\omega) \Rightarrow \exists \omega' \in \alphabet^*. \stringConstraint(\omega \omega')$; and
    \rtwo \emph{complete} if all completable prefixes are accepted by the monitor:
        $\monitor_\stringConstraint(\omega) \Leftarrow \exists \omega' \in \alphabet^*. \stringConstraint(\omega \omega')$.
Constrained decoding is \emph{sound} if every string produced by \Cref{alg:constrained-decoding} satisfies the constraint $\stringConstraint$.
Constrained decoding is \emph{complete} if every string satisfying $\stringConstraint$ can be returned by \Cref{alg:constrained-decoding} with non-zero probability.
\begin{lemma}[Soundness]\label{lem:cd_sound}
    Constrained decoding is sound if $\stringConstraint(\omega)$ can be checked soundly.
\end{lemma}
This is easy to see, as the algorithm only returns outputs $\omega$ that pass the check $\stringConstraint(\omega)$.
Note that this condition ensures that \Cref{alg:constrained-decoding} is sound \textit{even when the monitor is unsound}.

Unlike soundness, completeness often does not hold for common instantiations of $\wl$ (e.g., greedy decoding, where the highest probability continuation of the previously dequeued string is always selected). 
However, we still want a relative characterization of completeness to ensure that constrained decoding cannot discard solutions we could achieve with unconstrained decoding.
To this end, we define a decoding strategy as simply an implementation of $\wl$.
We say constrained decoding is \emph{complete up to a decoding strategy} if under that implementation, \Cref{alg:constrained-decoding} can return every valid solution that unconstrained decoding can produce with at least as high probability.

We prove completeness up to two of the most popular decoding strategies, top-$p$ and top-$k$ sampling (though our proof methodology generalizes to other common strategies).
Top-$k$ sampling is a decoding method where, at each step, the model considers only the 
$k$ most likely next tokens for its current prefix and randomly selects one based on their normalized probabilities.
Top-$p$ sampling is similar, except instead of a fixed $k$ we use the smallest set of most probable tokens whose cumulative probability mass is above threshold $p$.
Either of these sampling approaches could be implemented as instantiations of \Cref{alg:constrained-decoding} by setting dequeue to iterate over the worklist, only choose pairs where the token sequence has a maximal length, and sample from the appropriate subset after renormalizing probabilities.

\begin{lemma}[Completeness up to Top-$p$ and Top-$k$ Sampling]\label{lem:cd_complete}
    If $\monitor_\stringConstraint$ is complete, then constrained decoding (\Cref{alg:constrained-decoding}) is complete up to both top-$p$ and top-$k$ sampling.
\end{lemma}

Note that not all decoding strategies satisfy completeness up to decoding. In a greedy/deterministic beam search for example, discarding an incompletable prefix might cause the LM to explore a completable prefix it would ordinarily have neglected. This exploration may cause constrained decoding to find a different solution than unconstrained decoding would have originally found.

%
While complete monitors guarantee completeness up to common decoding strategies, sound monitors make search more efficient.
\label{sec:syntactic-constraint-decoding}
In existing approaches for \textit{syntactic constrained decoding}, the string constraint $\stringConstraint$ is limited to asserting membership in a context-free language. In this case, monitors can be sound and complete~\cite{greatgramma,syncode}.

\subsection{Semantic Constrained Decoding}
\label{sec:semantic-constrained-decoding}
\emph{Semantic constrained decoding} is the special case of constrained decoding where the string constraint is defined as $\stringConstraint = \ASTConstraint \circ \parse$ for some predicate $\ASTConstraint\colon \AST \to \bool$. We call $\ASTConstraint$ a \emph{semantic constraint}, We implicitly extend the domain of any such $\ASTConstraint$ to $\AST \cup \{\bot\} \to \bool$ by setting $\ASTConstraint(\bot) = \false$.
For example, the semantic constraint in \Cref{sec:overview} checks whether an AST is an odd sum of even literals. 
The corresponding string constraint checks whether a string parses to such an AST.

\Cref{alg:constrained-decoding} requires a monitor for checking completability of the string constraint $\stringConstraint$.
For semantic constrained decoding, we have that
$$\completable(\omega, \stringConstraint) \equiv \exists \omega'.~ \ASTConstraint(\textsf{parse}(\omega \omega'))$$
%
%
The key insight of our approach is that we can re-express $\completable(\omega, \stringConstraint)$ in terms of the \textit{space} of ASTs a prefix $\omega$ represents.
First, we define the \textit{prefix space} of a string $\omega$ as the set of all parses of continuations of $\omega$:
$$\prefixSpace(\omega) \triangleq \{\textsf{parse}(\omega\omega') \mid \omega' \in \Sigma^*\}$$
By changing the quantified variable to range over ASTs, completability can now be written as
$$\completable(\omega, \stringConstraint) \equiv \exists t \in \prefixSpace(\omega).~ \ASTConstraint(t)$$
We are now asking if there exists a term in a given prefix space that satisfies a constraint $\ASTConstraint$.
Readers familiar with program synthesis will recognize this formulation as the standard shape of a synthesis problem.
Specifically, such a query is an example of a \emph{realizability}~\cite{unrealizability2020,popl23unrealizabilitylogic} problem, which asks whether a set contains a term that satisfies a given property.
$$\realizable(T, \ASTConstraint) \triangleq \exists t \in T.~\ASTConstraint(t)$$
Expressing realizability entirely over spaces of ASTs  removes the need for the string-level reasoning that has characterized prior constrained-decoding work~\cite{typespaper, monitors}. 
All together, our definition of completability simplifies to:
$$\completable(\omega, \stringConstraint) \equiv \realizable(\prefixSpace(\omega), \ASTConstraint)$$ 
Consequently, a monitor for $\stringConstraint$ can be implemented as $\realchecker_\ASTConstraint \circ \prefixSpace$, where 
$\realchecker_\ASTConstraint$ is a potentially approximate procedure for checking realizability. 
%

\subsubsection{Soundness and Relative Completeness of Semantic Constrained Decoding}
\label{sec:semantic-soundness-completeness}
A realizability checker is \rone \emph{sound} if every space it accepts is realizable:
$\realchecker_\ASTConstraint(T) \Rightarrow \exists t \in T.~\ASTConstraint(t)$ and
\rtwo \emph{complete} if every realizable space is accepted by the checker: ${\realchecker_\ASTConstraint(T) \Leftarrow \exists t \in T.~ \ASTConstraint(t)}$.

Since $\prefixSpace$ can be computed exactly, the following theorems follow immediately from \Cref{lem:cd_sound} and \Cref{lem:cd_complete}:
\begin{theorem}[Soundness of Semantic Constrained Decoding]\label{thm:sem_cd_sound}
     Semantic constrained decoding is sound if $\ASTConstraint$ can be checked soundly.
\end{theorem}

\begin{theorem}[Completeness up to Top-$p$ and Top-$k$ Sampling of Semantic Constrained Decoding]\label{thm:sem_cd_complete}
    If $\realchecker_\ASTConstraint$ is complete, then semantic constrained decoding is complete up to both top-$p$ and top-$k$ sampling.
\end{theorem}

\Cref{sec:overview} showed an example of a sound and complete realizability checker for the constraint being enforced.
In \Cref{sec:case-studies}, we present two more realistic examples:
a sound and complete realizability checker for the problem of equivalence-guided decoding,
and a complete\footnote{For this problem one can prove that no sound and complete realizability checker exists~\cite{system_f_untypable}} realizability checker for type-safe decoding.

\section{\name: Semantic Constrained Decoding via Coinduction}
\label{sec:corecursion}
The key idea of semantic constrained decoding is that one can define pruners on \textit{spaces} of programs, which are defined at the level of abstract syntax.
To create a programmatic interface for defining pruners, we need a framework for constructing, manipulating, and performing analyses over possibly infinite program spaces.
One framework for cleanly manipulating infinite objects is supplied by \citet{cocaml}; the framework  defines infinite objects with repeated structure (like program spaces) as \emph{regular codata} and lets users compute recursive functions on regular codata (\Cref{sec:coinduction-intro}).
We adopt this framework to represent and manipulate program spaces (\Cref{sec:program-spaces-as-coterms}).
As we show in \Cref{sec:parsers}, framing program spaces as regular codata allows us to not only implement pruners over program spaces, but also gives an elegant way to construct prefix spaces.
%
%
Finally, we describe how \name implements this approach as an embedded DSL in Python (\Cref{sec:implementation}).
%


\subsection{Programming with Regular Coinduction} \label{sec:coinduction-intro}
Coinduction is a powerful technique for representing and manipulating infinite objects. 
We present a brief introduction to coinduction adapted from the running example used by \citet{cocaml}.

\subsubsection*{Codata}
Consider the usual datatype for integer lists:
\begin{lstlisting}
data IList = INil
           | ICons Int IList
\end{lstlisting}
The \code{IList} datatype includes finite lists like \code{INil} and \code{ICons 1 (ICons 2 INil)}. However, \code{IList} also includes \emph{infinite} streams like, e.g., \code{ICons 1 (ICons 2 (ICons 3 ...))}. Infinite members of a datatype are called \emph{coterms}, or sometimes \emph{codata}. 

\begin{wrapfigure}{r}{0.45\textwidth}
  \vspace{-10pt}
  \centering
  \begin{tikzpicture}[node distance=2cm, every node/.style={draw, minimum size=1cm}, ->, >=Stealth]

    \node (one) at (0,0) {1};
    \node (ones) at (2,2) {\code{ones} = \code{ICons} left right};

    \draw[->] (ones) -- (one) node[midway, left, draw=none] {left};

    \draw[->] (ones.east) .. controls +(0.8,-0.8) and +(0.8,0.8) ..
      (ones.east) node[midway, right, draw=none] {right};

  \end{tikzpicture}
  \vspace{-10pt}
  \caption{Structure of the \code{ones} stream.}
  \label{fig:ones}
\end{wrapfigure}
In general, it is not always possible to finitely represent a coterm. However, many coterms do admit compact representations. For example, the stream of all ones can be written as \code{ones = ICons 1 ones}. The stream \code{ones} is an infinite term, but it only has finitely many distinct subterms (i.e., \code{1} and \code{ones}). This property, called \emph{regularity}, allows codata to be expressed with a finite representation. Alternatively, regular coterms can be depicted as cyclic term graphs, as shown by \Cref{fig:ones}.

Because regular codata can be finitely represented, we can compute their properties in finite time. For example, one cannot compute the set of integers contained in an arbitrary stream. However, a quick inspection of the definition of \code{ones} (or a fixpoint procedure over the term graph) reveals that \lstinline{ones} contains no integers besides $1$. When we represent program spaces as regular codata later, we will use these nice properties to answer questions about, e.g., the nonemptiness of a program space.

\subsubsection*{Computations over Regular Codata} 

To describe operations on codata, we can define typical recursive functions on the structure of the datatype representing codata. For example, we can define the following function to collect the set of distinct integers appearing in a stream:
\begin{lstlisting}
@fixpoint: EmptySet
elements :: IList -> Set Int
elements INil = EmptySet
elements ICons x xs = (add (elements xs) x)
\end{lstlisting}
Over an arbitrary infinite coterm \code{t}, it may not be possible to explicitly compute \code{(elements t)}. However, one can do so when \code{t} is regular. The key observation is that, if \code{t} has only finitely many distinct subterms, we will only need to evaluate \code{elements} on finitely many distinct coterms. In general, functions that compute data from regular coterms can be evaluated as fixpoint computations over the coterm's cyclic term graph. Throughout the paper, we distinguish such functions with the annotation \code{@fixpoint}, followed by the initial value for the fixpoint computation.

It is also possible to
recursively define functions that produce
regular coterms.
In the case of streams, we might consider a function that adds $1$ to each element of the stream
\begin{lstlisting}
addOnes :: IList -> IList
addOnes INil = INil
addOnes (ICons x l) = ICons (x + 1) (addOnes l)
\end{lstlisting}

If we apply \code{addOnes} to \code{ones}, we see that only finitely many distinct terms are generated:
\begin{lstlisting}
addOnes ones = addOnes (ICons 1 ones)
             = ICons 2 (addOnes ones)
\end{lstlisting}

\begin{figure}
  \centering
  \begin{tikzpicture}[node distance=2cm, every node/.style={draw, minimum size=1cm}, ->, >=Stealth]

    \node (one) at (0,0) {2};
    \node (ones) at (2,2) {\code{addOnes ones} = \code{addOnes (ICons 1 ones)} = \code{ICons} left right};

    \draw[->] (ones) -- (one) node[midway, left, draw=none] {left};

    \draw[->] (ones.east) .. controls +(0.8,-0.8) and +(0.8,0.8) ..
      (ones.east) node[midway, right, draw=none] {right};

  \end{tikzpicture}
  \vspace{-10pt}
  \caption{Structure of \code{addOnes ones}.}
  \vspace{-7pt}
  \label{fig:addones}
\end{figure}

Regular corecursive functions (i.e., functions that map regular coterms to regular coterms) can be computed explicitly over regular coterms with the techniques of \citet{cocaml}.
At a high level, their techniques work by memoizing the result of function expansions and producing a cycle upon encountering a previously seen call.

Of course, it is possible that the user may write a function that yields a coterm that is \textit{not} regular.
We do nothing to prevent this or otherwise enforce termination---nontermination is a wholly orthogonal problem to our purposes that has been studied in the literature on coinduction, for example in~\cite{capretta2005general}. In practice, one can avoid most issues by \rone only writing recursive functions that call themselves on subterms of the input term and
\rtwo avoiding \code{@fixpoint}-annotated functions whose return types contain infinitely many values.
Finally, our framework provides one special built-in function, \code{isCyclic}, that returns true if a regular coterm has cycles and false if it is just a tree.

\subsection{Program Spaces as Coterms}
\label{sec:program-spaces-as-coterms}
In \Cref{sec:overview} we lifted the \code{Expr} datatype, which describes individual programs, to \code{ExprSpace}, which describes spaces of programs. Now, we generalize that principle to arbitrary program datatypes.
First, we define a generic \code{Program} datatype (\Cref{fig:program-type}) that describes the structure of abstract syntax trees.
It is parametric over a type parameter \code{sym}, which represents the different possible constructors for ASTs.
We write \code{Program} this way to allow for manipulations over ASTs that are generic over the constructor.
Often, we want to treat a datatype like \code{Expr} ``as if it were'' an instance of \code{Program}. 
That is, \code{Expr} is roughly the same as \code{Program ExprSym}, where \code{ExprSym = Sum}, ignoring constraints on arity of children.
For notational convenience, we ignore this distinction and treat these two datatypes as the same.

\changed{
We can abstract over \code{ExprSpace} in the same way we abstracted over \code{Expr}. 
Specifically, we can define a \code{ProgramSpace} (\Cref{fig:programspace-type}) as either:
\code{Empty}, the empty space representing no programs; \code{Union}, a union of program spaces; \code{Lit}, a set of concrete literals described by a regular expression; or \code{ASTNode}, a product of program spaces under some AST constructor.
}
We will write most concrete program spaces in the style of \code{ExprSpace} for clarity, but generic algorithms on program spaces will be written over \code{ProgramSpace}.

As discussed in \Cref{sec:overview}, we need to be able to decide whether a \code{ProgramSpace} represents the empty set. While \code{Empty} is obviously empty, determining emptiness of a coterm like\linebreak 
\code{E = Union [E, E]} requires a fixpoint computation. To solve this problem, we implement a simple function, \code{nonempty}, to check whether a space is empty:
\begin{lstlisting}
@fixpoint: False
nonempty :: ProgramSpace sym -> ProgramSpace sym
nonempty Empty = False
nonempty Union children = any nonempty children
nonempty (Lit l) = regexNonempty l
nonempty ASTNode _ children = all nonempty children
\end{lstlisting}

Finally, when writing pruners it will be useful to convert a \code{ProgramSpace} representing only a single program into the corresponding element of type \code{Program}.
Such spaces naturally arise when prefixes contain complete subexpressions.
For this reason, we define a utility function \code{collapse} as:
\begin{lstlisting}
collapse :: ProgramSpace sym -> Maybe (Program sym)
collapse x | isCyclic x     = Nothing
collapse (Union [x])        = collapse x
collapse (Lit l)
  | isSingleton l           = Just (ASTLeaf (toString l))
collapse ASTNode sym [children] = do
    collapsedChildren <- mapM collapse children
    ASTNode sym collapsedChildren
collapse _                  = Nothing
\end{lstlisting}

\subsection{Parsers as Coterms} \label{sec:parsers}
Recall from \Cref{sec:overview} that our lexer takes a string $\omega$ and returns a finite set of \emph{lexical prefixes}. A lexical prefix is a sequence of complete lexemes followed by a possibly incomplete lexeme. 
The lexer is designed so that every lex of a completion of $\omega$ extends one of the lexical prefixes.
The details of our lexing algorithm are discussed in \Cref{app:lexing}.

Once lexing is complete, we must perform parsing. We carry out parsing by following the \textit{parsing with derivatives} technique introduced by \citet{parsingWithDerivatives}. Intuitively, a \emph{parser} is a function from strings to trees that represents the state of a parser after consuming some string $\omega$. Parsers support a \emph{derivative} operation that advances the parser state by consuming a lexeme, analogous to Brzozowski derivatives for regular expressions~\cite{Brzozowski}. Parsers, their derivative operation, and a method to convert a parser to a program space can be easily represented as codata and corecursive functions.
%

%
Intuitively, after consuming some prefix $\omega$, a parser (\Cref{fig:parser-states}) can be in one of a few possible states when it considers how to handle further lexemes. 
\begin{itemize}
    \item \code{EmptyParser}: The parser accepts no strings, meaning $\omega$ is not in the language that the parser recognizes.
    \item \code{Choice ps}: The parser accepts any string accepted by any parser in the list of parsers ps.
    \item \code{ConstParser reg}: The parser accepts any single lexeme in the regex reg.
    \item \code{NullParser l}: The parser has parsed the lexeme \code{l} but will 
    consume nothing further.
    \item \code{Seq a ps1 ps2}: The parser is a sequential composition of parsers, where \code{ps1} is the sequence elements whose parsing is complete and \code{ps2} is the sequence of elements that remain. The \emph{action} \code{a} describes how completed parses map to AST nodes.
\end{itemize}
\begin{figure}
  \centering
  \begin{subfigure}[t]{0.48\textwidth}
    \centering
\begin{lstlisting}[xleftmargin=0em]
data Program sym =
  | ASTLeaf String
  | ASTNode sym [Program sym]
\end{lstlisting}
    \caption{\code{Program} datatype, parametric over a type \code{sym} of AST constructors.}
    \label{fig:program-type}
  \end{subfigure}
  \hfill
  \begin{subfigure}[t]{0.48\textwidth}
    \centering
\begin{lstlisting}[xleftmargin=0em]
data ProgramSpace sym =
  | Empty
  | Union [ProgramSpace sym]
  | Lit Regex
  | ASTNode sym [ProgramSpace sym]
\end{lstlisting}
    \caption{Lifting of the \code{Program} datatype to a \code{ProgramSpace}.}
    \label{fig:programspace-type}
  \end{subfigure}
  \caption{Programs and program spaces.}
  \label{fig:program-datatypes}
\end{figure}
\subsubsection{Derivatives on Parsers}
To advance the state of a parser, we define a \emph{corecursive derivative function} \code{d} (\Cref{fig:parser-derivative}). Given a lexeme $l$ and a parser $p$, the coterm \code{d l p} represents the parser that is obtained after the parser \code{p} consumes the lexeme $l$.
\changed{
The definition of \code{d} recurses on the structure of the input parser \code{p}:
\begin{itemize}
    \item If $p$ is an \code{EmptyParser}, then \code{d p l} returns \code{EmptyParser}.
    \item If $p$ is a \code{NullParser}, then \code{d p l} returns \code{EmptyParser} as $p$ cannot accept further input.
    \item If $p$ is a \code{ConstParser}, then \code{d p l} returns a constant parser storing $l$ if the regex for $p$ matches $l$; otherwise it returns \code{EmptyParser}.
    \item If $p$ is a \code{Choice}, then \code{d p l} returns a \code{Choice} of the derivative of its children.
    \item If $p$ is a \code{Seq}, then \code{d p l} derives the parser at the current position in the sequence and splits into a \code{Choice} of cases. If the derived parser can accept the empty string, we advance to the next parser in sequence by applying the nullability operator \code{delta}, which “completes” a parser by throwing away all parses that haven’t been fully matched. Otherwise, we update the sequence in place.
\end{itemize}
}
\begin{figure}
    \centering
\begin{subfigure}[t]{0.48\textwidth}
\begin{lstlisting}[xleftmargin=0em]
data Parser sym =
  | EmptyParser
  | Choice [Parser sym]
  | ConstParser (Regex reg)
  | NullParser (String l)
  | Seq (Action sym) 
        [Parser sym] [Parser sym]
\end{lstlisting}
\caption{\code{Parser} datatype, parametric over the the type parameter \code{sym} describing the AST constructors it can produce.}
\end{subfigure}
\hfill
\begin{subfigure}[t]{0.48\textwidth}
\begin{lstlisting}[xleftmargin=0em]
data Action sym = Action {
    constructor : Maybe sym,
    child_indices: [Int]
}
\end{lstlisting}
\caption{An \code{Action} describes how a sequence of parse trees is converted into an abstract syntax tree: it stores the AST constructor to use along with a list of indices of children to pass in.}
\end{subfigure}
\caption{Parsers and actions.}
\label{fig:parser-states}
\end{figure}
\begin{figure}
  \centering
  \begin{subfigure}[t]{0.48\textwidth}
    \centering
\begin{lstlisting}[xleftmargin=0em]
d :: Lexeme -> Parser sym -> Parser sym
d l (ConstParser reg)      =
  if matches reg l
    then NullParser l     
    else EmptyParser
d l (Choice ps)            = 
    Choice (map (d l) ps)
d l (Seq a finished p::ps) =
  let derived = d l p in
  Choice [
    Seq a (finished ++ [delta derived]) ps,
    Seq a finished (derived : ps)
  ]
d _ _                      = EmptyParser
\end{lstlisting}
    \caption{Derivative for parsers: in the \code{Seq} case, we use the helper $\delta$ to handle the case where the current parser in sequence has finished consuming input.}
    \label{fig:parser-derivative}
  \end{subfigure}
  \hfill
  \begin{subfigure}[t]{0.48\textwidth}
    \centering
\begin{lstlisting}[xleftmargin=0em]
delta :: Parser sym -> Parser sym
delta p@(NullParser _)  = p
delta (Choice ps)       = 
    Choice (map delta ps)
delta p@(Seq _ _ [])    = p
delta _                 = EmptyParser
\end{lstlisting}
    \caption{The function $\delta$ returns the ``nullable'' portion of a parser, i.e. the branches which do not consume further input.}
    \label{fig:parser-delta}
  \end{subfigure}
  \caption{Definition of derivative for parsers.}
  \label{fig:parser-defs}
\end{figure}
More details appear in \citet{parsingWithDerivatives} and \citet{parsingWithZippers}.

\subsubsection{From Parsers to Program Spaces}
Finally, we need to construct a program space reprsenting all programs our parser could emit. To do so, we define a corecursive function \code{image} (\Cref{fig:image_def}) that recurses over the structure of the input parser \code{p}:
\changed{
\begin{itemize}
    \item If $p$ is an \code{EmptyParser}, then \code{image} returns the empty space.
    \item If $p$ is a \code{Choice}, then \code{image} returns a \code{Union} of the images of its children.
    \item If $p$ is a \code{ConstParser}, then \code{image} returns a \code{Lit} matching any lexeme $p$ could accept.
    \item If $p$ is a \code{NullParser}, then \code{image} returns a \code{Lit} matching only the lexeme $p$ has already accepted.
    \item If $p$ is a \code{Seq}, then \code{image} applies the \code{Action} to $p$. If the constructor for the action is \code{None}, we ``skip'' converting this layer of the parser---this option supports inlining trees for constructs like parenthesis specifically.
\end{itemize}
}

\begin{figure}
    \centering
\begin{lstlisting}
image :: Parser sym -> ProgramSpace sym
image EmptyParser = Empty
image (Choice ps) = Union (map image ps)
image (ConstParser reg) = Lit reg
image (NullParser l) = Lit l
image (Seq a ps ps') =
    let children = ps @ ps' in
    match a.constructor with
        Nothing ->
            image $ children!!(head a.child_indices)
        Some s  ->
            ASTNode s [image $ children!!i | i <- a.child_indices]
\end{lstlisting}
    \caption{The image of a parser is the space of trees it could emit.}
    \label{fig:image_def}
\end{figure}

Together, the lexing, derivative-based parsing, and image construction enable \toolname to track the evolving set of valid program completions at every step of generation. This coinductive representation forms the foundation for subsequent semantic pruning and realizability checking.

\subsection{Implementation}
\label{sec:implementation}
We implemented \name as an embedded domain-specific language (DSL) in Python that supports symbolic manipulation of coinductive structures. The backend draws inspiration from prior work on regular coinduction, such as CoCaml~\cite{cocaml}, and makes extensive use of Python metaprogramming.
We do not use CoCaml directly to ensure compatibility with the rest of our Python implementation.
Notably, \name defines two key decorators:
\begin{itemize}
    \item
\code{@rewrite}: Marks functions that construct coterms by symbolically rewriting recursive calls into a finite system of equations (if the recursion is regular).
\item
\code{@fixpoint}: Interprets a coterm via Kleene iteration to compute a fixed point over the coterm's input.
\end{itemize}

Additionally, \name incorporates domain-specific simplifications for program spaces.
For instance, unions over parser states or AST spaces are simplified by removing empty branches---i.e., when  \code{nonempty} returns false.
For parser derivatives, we also implement the compaction optimization described in ~\cite{parsingWithDerivatives} to reduce the size of coterms.
\section{Case Studies}\label{sec:case-studies}

We demonstrate the generality of our framework by instantiating it for two domains:
enforcing semantic equivalence for a basic functional language and enforcing type safety for TypeScript.

\subsection{Equivalence-Guided Decoding}\label{sec:case-studies:egraphs}
In this case study, the LM is given expressions in a basic functional language and is asked to refactor them into equivalent programs.
The language consists of basic arithmetic operators, identifiers, integer constants, function applications, and let bindings.
For example, consider the following program that computes the distance between two points.
\begin{lstlisting}
    sqrt (pow (x2 - x1) 2 + pow (y2 - y1) 2)
\end{lstlisting}
A valid output for the LM might be:
\begin{lstlisting}
    let dx = x2 - x1 in
    let dy = y2 - y1 in
    sqrt (pow dy 2 + pow dx 2)
\end{lstlisting}
\changed{
In general, program equivalence is not decidable; therefore, \toolname uses \textit{e-graphs}~\cite{egg} to reason about program equivalence approximately.
An e-graph is a data structure for compactly representing spaces of equivalent programs.
At a high level, an e-graph maintains a collection of equivalence classes of expressions.
To build an e-graph, a user supplies an initial expression and a list of \textit{rewrite rules} that define equivalences.
For example, the rule $?x + ?y \to ?y + ?x$ encodes commutativity of addition.
A term like $f(a, b) + 2$ can \textit{match} the left-hand side of this rule, to be rewritten to $2 + f(a, b)$.
Initially, an e-graph only represents trivial equivalences (i.e., every equivalence class has only one member).
Equivalence classes are then refined via a process known as \textit{equality saturation}, which iteratively applies rules (up to a fixed budget) to subterms in the e-graph.
Whenever a subterm matches a rule, the e-graph is modified by applying the rewrite and merging the equivalence classes of the rewritten subterm with the subclass of the original subterm.
Crucially, e-graphs can compactly represent large sets of terms by \textit{sharing}: instead of storing a term like $f(x, y)$ directly, an e-graph will add a layer of indirection so that the node $f$ has pointers to the \textit{equivalence classes} of $x$ and $y$, rather than the expressions themselves.
}

For our case study, we define a set of basic rewrite rules for arithmetic expressions (ones with addition, subtraction, multiplication, and division; see \Cref{app:egraphs_results} for the full set of rules).
We then use the egglog library~\cite{egglog} to build an e-graph. 
\toolname then enforces the constraint that the output of an LM must be a program stored by this e-graph---\changed{i.e., \toolname guarantees that the output term is equivalent up to rewrite}.

Importantly, the set of terms an e-graph represents forms a regular tree language that can be represented by a finite tree automaton \cite{suciu2025semantic}.
We use this property to build a semantic pruner whose goal is to only keep terms that appear in the automaton.
In particular, we syntactically convert the automaton corresponding to the e-graph into a corresponding \code{ProgramSpace} coterm.
Then we can define a function \code{intersectSpaces} to filter out non-equivalent programs from the prefix space.

While e-graphs can easily deal with let-free expressions, we need some extra care to make sure that let-expressions are taken into account by the e-graph.
To handle let expressions, our pruner adopts a simple strategy: if we have not finished generating a let binding, we do not prune any programs---no matter what expression the let binds a variable to, we can always avoid using the variable in the final expression.
If a let expression space contains a complete binding of form $x = e_1$ where $e_1$ is a complete expression (i.e., the LLM has generated a let binding), then we update the current e-graph with an additional rewrite rule $e_1 \to x$.
When generating the last expression under the let expression, we use the semantic pruner induced by this new e-graph.
The pseudocode for our checker is shown below:
\begin{lstlisting}
prune (Let b s1 s2) egraph = fromMaybe Empty do
    x <- collapse b       -- Wait for variable name to be generated
    e1 <- collapse s1     -- Wait for binding for x to be generated
    let egraph' = updateEGraph egraph x e1
    prune s2 egraph'      -- Recurse, adding extra rule e1 => x
prune s egraph             = s `intersectSpaces` (eGraphToSpace egraph) 
\end{lstlisting}

\subsection{Type-Safe Decoding for TypeScript}\label{sec:case-studies:types}
In our second case study, we use \toolname to implement a semantic pruner that enforces type safety for a subset of TypeScript that covers most core language features (branching, loops, variable declaration and reassignment, constant and mutable variables, and [recursive] function definitions). We study TypeScript because it is widespread~\cite{octoverse2022}, and previous work has shown that small and mid-sized LMs often fail to produce type-safe TypeScript code~\cite{typespaper}.

It is a folklore result that even for simply-typed lambda calculus, the set of well-typed terms is not a regular tree language. As a consequence, no pruner can remove exactly the ill-typed programs from a program space---the pruned space would not be representable as regular codata. Instead, we implement an overapproximate pruner that conservatively discards many ill-typed programs while ensuring that all well-typed programs are preserved.


\paragraph{Type Pruning}
Our typesafety pruner is a corecursive function of the following form:
\begin{lstlisting}
type_prune :: Environment -> Type -> ProgramSpace -> ProgramSpace
\end{lstlisting}
Much like a type checker, \code{type_prune} is parameterized by a typing environment and target type. However, it operates on program spaces instead of single programs: given an input space it returns an overapproximation of the subspace of well-typed programs.
For example, for the program space \code{(Union (Add x 6) false)}, the expression
\begin{lstlisting}
type_prune {x: int} int (Union (Add x 6) false) = Add x 6
\end{lstlisting}
yields a program space containing \code{(Add x 6)}, the only program that types to \code{int} when the variable \code{x} has type \code{int}.
If we call \typerestrict with the empty environment and the target type $\top$ (i.e., the supertype of all types), we get an approximation of the set of well-typed closed programs---for the program space \code{(Union (Add x 6) false)}, we would get \code{false}. 
\Cref{app:typescript} discusses the full implementation of \code{type_prune}; we give an overview below.

\paragraph{Designing \typerestrict}
We designed \typerestrict to be an analogue of checking in bidirectional type checking~\cite{bidirectional}, which resolves typing judgements locally and is deterministic.
For ground terms, we check their type against the target type and decide whether to keep them or prune them (yielding the \code{Empty} program space):
\begin{lstlisting}
    type_prune {x: int} int  6 = 6
    type_prune {x: int} bool 6 = Empty
\end{lstlisting}
For most program constructors, a bidirectional type checker would check the types of the subterms to decide how to type the superterm. Similarly, our type pruner prunes each set of subterms and returns only the valid combinations: 
\begin{lstlisting}
type_prune {x: int} int (Add x 6) = 
    Add (type_prune {x: int} int x) (type_prune {x: int} int 6)    
\end{lstlisting}

Like with bidirectional type checking, this approach gets complicated for function applications. Suppose \code{F} and \code{X} are program spaces, and consider type pruning the application of \code{F} to \code{X}---i.e., \linebreak \code{(type_prune   \{\}   int   (App F X))}. We would like to handle this case by pruning \code{F} and \code{X} to their expected types and combining the results. However, there are infinitely many valid combinations of types that programs in \code{F} and \code{X} could take, e.g., \code{int -> int} paired with \code{int}, or \code{(bool -> int) -> int} paired with \code{bool -> int}, and so on.

\changed{In bidirectional type checking, this multiplicity is resolved by first inferring the type of the function. When the type of the function is known, the argument can be checked against the function's input type. Since \code{F} may represent infinitely many programs with infinitely many possible types, type inference on \code{F} may be insufficient to resolve the multiplicity. Instead, we \textit{wait} until \code{F} resolves to a single AST (i.e., via \code{collapse} from \Cref{sec:program-spaces-as-coterms}). Before \code{F} resolves to a single AST, we type prune \code{F} to exclude functions that cannot return the target type, and we leave \code{X} unmodified. Once \code{F} resolves to a single AST, we can infer the type of that AST and use it as the basis of our pruning. For example, if \code{F} = \code{foo}, then}
\begin{lstlisting}
type_prune {foo: (int -> int) -> int} int (App foo X) = 
    App foo (type_prune {foo: (int -> int) -> int} int -> int X)
\end{lstlisting}
\changed{Type inference for single ASTs can be implemented in the usual way (e.g., \cite{typescript_good_typesystem}). type pruning \code{Empty} always yields \code{Empty}, and type pruning a \code{Union} will type prune each child and union the results.}

Despite the overapproximate handling of function application, our pruner is quite precise over program spaces that correspond to left-to-right program prefixes. For such spaces, \code{F} will always resolve to a concrete function before any of \code{X} is written. This left-to-right resolution means that we can always infer the type of \code{F} when we start writing \code{X}. 

\changed{We note that \toolname makes it more natural to implement an overapproximate type pruner than an underapproximate type pruner. The idea of waiting for a term to be completed is supported naturally by the \code{collapse} function. In contrast, writing an underapproximate pruner requires analyzing some subset of the combinations of function and argument types at the same time.}

\section{Evaluation}\label{sec:eval}

We evaluate the two instantiations of \name presented in \Cref{sec:case-studies} to answer the following research questions:

\begin{enumerate}[\bfseries RQ1)]
\item \textbf{Effectiveness.} Does \name make language models more effective at generating programs that satisfy semantic constraints?
\item \textbf{Overhead.} What is the computational cost (e.g., latency, token usage) of enforcing semantic constraints during generation?
\end{enumerate}

\subsection{Experiment Setup}

We compare our semantic constrained decoders written in \name to the following baselines:
\begin{itemize}
  \item \emphbf{Unconstrained Decoding}: The LM generates code without any constraints.
  \item \emphbf{Grammar-Constrained Decoding (GCD)}:
   The LM must produce syntactically valid programs (enforced via a grammar), but no semantic restrictions are applied. In \name, grammar-constrained decoding corresponds to using the identity function as the pruner.
\end{itemize}

\subsubsection*{Benchmarks: Equivalence-Guided Decoding}
We created 10 benchmark tasks in the basic functional language shown in \Cref{sec:case-studies:egraphs},
where the goal is to refactor a program into an equivalent one---e.g., factoring out subexpressions into \code{let} bindings.
We use a fixed system prompt that specifies the grammar of the language and instructs the model to return only a refactored program\footnote{We also ran a variant of this benchmark where we included in the prompt the list of rewrite rules. The overall trends are similar; this data can be found in \Cref{app:egraphs_results}.} with no explanation.
The exact prompt and the full list of benchmarks are shown in \Cref{app:egraphs_results}.
We count a response as \emph{correct} if the LM produces a complete program that is equivalent to the input program, with no post-processing.
Unconstrained decoding often fails to produce just the output code, despite being explicitly instructed to do so.
Therefore, we also evaluate a variant where the prompt wraps outputs in triple backticks (\code{```}) to encourage the model to delimit its code clearly. 
This variant avoids penalizing runs that fail only due to formatting.
We refer to the two variants as \textit{No Delimit} and \textit{Delimit}, respectively.

\subsubsection*{Benchmarks: Type-Safe Decoding}
For our second instantiation of \name, we source the benchmarks from the TypeScript translations of the MBPP~\cite{mbpp} tasks
from the MultiPL-E dataset~\cite{multiplE}.
We extract the \changed{72}/809 tasks that can be solved in our language fragment.
\changed{Our prompting strategy mirrors that of \citet{typespaper}, but we additionally}
provide context to the LM to instruct it to avoid language constructs outside our language fragment. Our \changed{full prompting strategy} and benchmarks are included in \Cref{app:typescript_results}.

An example task is given below:
\begin{verbatim}
// Write a typescript function to find the next perfect square
// greater than a given number.
function next_Perfect_Square(N: number): number
\end{verbatim}
We count a response as \emph{correct} if the generated TypeScript program compiles. 
\changed{For the TypeScript evaluation, we also run the tool by \citet{typespaper} for comparison (although we do not instruct the LM to avoid programs outside our language fragment since their tool supports a larger subset of TypeScript).}

\subsubsection*{Models, Parameters and Hardware}

We run all experiments using the instruction-tuned versions (i.e. models that are trained to follow instructions in the prompt) of the following models:
DeepSeek-Coder-6.7b, CodeLlama-7B, and CodeLlama-13B.
Each model is evaluated at five different sampling temperatures: ${0.01, 0.3, 0.5, 0.7, 1.0}$.
These ranges of small-to-medium models and low-to-high temperatures lets us explore a range of model capability.
We run all experiments on a Supermicro SYS-4029GP-TRT with two Intel(R) Xeon(R) Gold 6230 CPUs, 384 GB RAM, a 4 TB SSD, and eight Nvidia Geforce RTX 2080Ti GPUs.

\subsubsection*{Procedure}

For each benchmark, model, decoder, and temperature, we run constrained decoding until either:
the $\endtok$ token is generated or
a 150 second timeout is reached.
We only implement a naive sampling strategy where if a token is rejected, we backtrack by one token and re-sample with that token removed.
Constrained decoders may fail if they use up their token or time budget without completing a valid program---especially if the LM repeatedly proposes unrealizable tokens that must be pruned. 
Unconstrained decoders may also fail to terminate if the model does not emit an $\endtok$ token withing the budget.

A run is considered \textbf{successful} if:
\rone A complete program is emitted, and
\rtwo It satisfies the semantic constraints of the task (i.e., equivalence or type safety).

\subsection{RQ1: Effectiveness}
\subsubsection{Quantitative Evaluation}
\label{sec:quantitative}

\Cref{tab:decoding-performance} reports the number of successful runs for \name and the two baselines---unconstrained decoding and grammar-constrained decoding---on all benchmarks.
Across nearly all configurations, semantic constrained decoding delivers consistent and often dramatic improvements.

\paragraph{Equivalence-Guided Decoding}
Unconstrained, most models perform very poorly on our benchmarks, with most failing to generate even a single semantically equivalent program.
Several factors contribute to this:
\rone We intentionally use small- and medium-sized models, which highlight the gains possible even for less capable models.
\rtwo The toy language used in the benchmarks is likely out-of-distribution for most pretrained LMs, which are tuned on real-world languages like Python and JavaScript.
In particular, DeepSeek-Coder-6.7B frequently attempts to write Python, Lisp, or TypeScript code, and fails all benchmarks without semantic constraints as a result.
\rthree Despite clear instructions, models frequently emit natural language explanations, markdown, or commentary—none of which are semantically valid outputs under our equivalence checker.

\changed{
Because we define equivalence with respect to a reference set of rewrite rules, it is possible that these rules do not encode every possible equivalence a user might have in mind---e.g., in our benchmarks we treated all named functions as uninterpreted, but particular functions could have additional semantics associated with them.
%
%
To mitigate this source of imprecision, we manually check the unconstrained and syntax-constrained data for false negatives (programs that could be considered equivalent, but are not proved equivalent by our rewrite rules). 
We only found 8 such instances out of the 600 total programs for these baselines.
}

\begin{table}[t]
\centering
\caption{Successful generations for different decoding strategies across models and temperatures (higher is better).
For equivalence-guided decoding we report the number of benchmarks for which an equivalent program was produced.
For TypeScript we report the number of benchmarks on which compilable code was produced.
Best results per column are \textbf{bolded}.}
\label{tab:decoding-performance}
\resizebox{\textwidth}{!}{%
\begin{tabular}{@{}ll|ccccc:c|ccccc:c|ccccc:c@{}}
\toprule
 & & \multicolumn{6}{c|}{\textbf{DeepSeek-Coder-6.7b}} & \multicolumn{6}{c|}{\textbf{CodeLlama-7B}} & \multicolumn{6}{c}{\textbf{CodeLlama-13B}} \\
 & & \multicolumn{5}{c:}{\textbf{Temperature}} & \multirow{2}{*}{\textbf{Tot.}} & \multicolumn{5}{c:}{\textbf{Temperature}} & \multirow{2}{*}{\textbf{Tot.}}  & \multicolumn{5}{c:}{\textbf{Temperature}} & \multirow{2}{*}{\textbf{Tot.}}  \\
 & & 0.01 & 0.3 & 0.5 & 0.7 & 1.0 & & 0.01 & 0.3 & 0.5 & 0.7 & 1.0 & & 0.01 & 0.3 & 0.5 & 0.7 & 1.0 & \\
\midrule[1pt]
\multirow{3}{*}{\shortstack{\textbf{Equivalence}\\\textbf{No Delimit }\\\textbf{(10 programs)}}}
 & Unconstrained     & 0 & 0 & 0 & 0 & 0 & 0 & 0 & 0 & 0 & 0 & 0 & 0 & 0 & 0 & 0 & 0 & 0 & 0 \\
 & Grammar        & 0 & 0 & 0 & 0 & 0 & 0 & 0 & 0 & 0 & 0 & 0 & 0 & 0 & 0 & 0 & 0 & 0 & 0 \\
 & Semantic  & \textbf{7} & \textbf{7} & \textbf{8} & \textbf{8} & \textbf{3} & \textbf{33} & \textbf{8} & \textbf{9} & \textbf{9} & \textbf{9} & \textbf{8} & \textbf{43} & \textbf{10} & \textbf{8} & \textbf{8} & \textbf{10} & \textbf{6} & \textbf{42} \\[2pt]
\cdashline{1-20}
\multirow{3}{*}{\shortstack{\textbf{Equivalence}\\\textbf{Delimit}\\\textbf{(10 programs)}}}
 & Unconstrained     & 0 & 0 & 0 & 0 & 0 & 0 & 3 & \changed{4} & 2 & 3 & 1 & 13 & 4 & 2 & 2 & 3 & 2 & 13 \\
 & Grammar        & 0 & 0 & 0 & 0 & 1 & 1 & \changed{5} & \changed{6} & \changed{7} & 4 & 1 & 23 & \changed{6} & 5 & 2 & 2 & 0 & 15 \\
 & Semantic  & \textbf{10} & \textbf{10} & \textbf{8} & \textbf{9} & \textbf{8} & \textbf{45} & \textbf{9} & \textbf{10} & \textbf{10} & \textbf{6} & \textbf{6} & \textbf{41} & \textbf{9} & \textbf{8} & \textbf{7} & \textbf{8} & \textbf{6} & \textbf{38} \\
\midrule
\multirow{4}{*}{\shortstack{\textbf{TypeScript}\\\textbf{(72 programs)}}}
 & Unconstrained        & \changed{57} & \changed{55} & \changed{48} & \changed{45} & \changed{9} & \changed{214} & \changed{69} & \changed{71} & \changed{71} & \changed{66} & \changed{31} & \changed{308} & \changed{68} & \changed{67} & \changed{60} & \changed{47} & \changed{17} & \changed{259} \\
 & Grammar                 & \changed{38} & \changed{31} & \changed{28} & \changed{19} & \changed{10} & \changed{126} & \changed{67} & \changed{65} & \changed{62} & \changed{59} & \changed{36} & \changed{289} & \changed{68} & \changed{67} & \changed{57} & \changed{36} & \changed{8} & \changed{236} \\
 & Semantic             & \changed{\textbf{65}} & \changed{{62}} & \changed{{61}} & \changed{{58}} & \changed{{28}} & \changed{{274}} & \changed{\textbf{72}} & \changed{{71}} & \changed{\textbf{72}} & \changed{{66}} & \changed{{58}} & \changed{{339}} & \changed{\textbf{72}} & \changed{{69}} & \changed{{69}} & \changed{{66}} & \changed{{43}} & \changed{{319}} \\

\cdashline{3-20}
& \citet{typespaper}  & \changed{62} & \changed{\textbf{69}} & \changed{\textbf{71}} & \changed{\textbf{68}} & \changed{\textbf{67}} & \changed{\textbf{337}} & \changed{67} & \changed{\textbf{72}} & \changed{70} & \changed{\textbf{71}} & \changed{\textbf{71}} & \changed{\textbf{351}} & \changed{70} & \changed{\textbf{72}} & \changed{\textbf{70}} & \changed{\textbf{68}} & \changed{\textbf{67}} & \changed{\textbf{347}} \\
\bottomrule
\end{tabular}
}
\end{table}
\paragraph{Type-Safe Decoding}
\changed{All three} models often produce compiling code, except at high temperatures, where their performance degrades.
\changed{The two CodeLlama models perform similarly, and the DeepSeek-Coder-6.7b model performs somewhat worse, especially at the highest temperature.}
Surprisingly, CodeLlama-7B slightly outperforms CodeLlama-13B at \changed{most} temperatures.
When aggregated across all temperatures, semantic constrained decoding consistently outperforms the baselines, improving the performance of \changed{DeepSeek-Coder-6.7b}, CodeLlama-7B, and CodeLlama-13B by an average of \changed{28.0\%, 10.1\%, and 23.2\%} across all temperatures, respectively.
We also observe that grammar-constrained decoding alone actually performs worse than unconstrained generation. 
This effect is likely due to the fact that we limit the grammar to a small fragment of TypeScript and prevent programs that would compile, but are not in our supported fragment, from being generated.
\changed{Additionally, we observe that \citet{typespaper} performs slightly better than our semantic constrained decoding at medium temperatures and significantly better at the highest temperature. The reason for this discrepancy is that their tool supports a larger language fragment than we do, and the model tends to ignore the prompt and use forbidden language features as temperature increases.}

\subsubsection{Qualitative Evaluation}
\label{sec:qualitative-evaluation}
Next, we discuss some instances that show the strengths of semantic constrained decoding relative to GCD and unconstrained generation. 

\paragraph{Equivalence}
One interesting phenomenon is that while GCD ensures syntactic validity, it does not prevent models from creatively ``embedding'' invalid outputs within the grammar. 
For example, the text in \ref{fig:egraph_example_output_nat_language}
can be parsed (misleadingly) as a series of function calls due to the permissiveness of the grammar. 
This quirk allows LMs to ``escape'' into natural language.
By contrast, semantic constrained decoding disallows such outputs because programs with undefined variables cannot be equivalent to the original.
Next, we show a case where semantic constrained decoding prevents an inequivalent (but otherwise error-free) program from being generated.
CodeLlama-7B under GCD returns the code in \Cref{fig:egraph_example_output_inequiv} when given the program \lstinline{start + (end - start) * scale}.
This program needs to be extended with the suffix
\lstinline{+ start} to be correct. 
Semantic constrained decoding prevents the model from terminating the program string prematurely 
and would force it to continue (equivalence-constrained) generation.

\paragraph{TypeScript}

\begin{figure}[t]
    \centering

    \begin{minipage}[t]{0.44\textwidth}
        \centering
\begin{subfigure}[t]{\textwidth}
\centering
\begin{verbatim}
Here is an equivalent refactored
version of the program using
explicit variable
\end{verbatim}
\caption{Natural language that parses as a valid program, generated with grammar-constrained decoding by CodeLlama-7B (see \Cref{sec:case-studies:egraphs}).}
\label{fig:egraph_example_output_nat_language}
\end{subfigure}

\vspace{4.6em}

\begin{subfigure}[t]{\textwidth}
\centering
\begin{verbatim}
let start_scaled = start * scale in
let end_scaled = end * scale in
(end_scaled - start_scaled)
\end{verbatim}
\caption{Inequivalent program generated with grammar-constrained decoding by CodeLlama-7B, for initial program \lstinline{start + (end - start) * scale} (see \Cref{sec:case-studies:egraphs}).}
\label{fig:egraph_example_output_inequiv}
\end{subfigure}
    \end{minipage}
    \hfill
        \begin{subfigure}[t]{0.52\textwidth}
            \centering 
\begin{verbatim}
function is_woodall(x: number): boolean {
  const divisors: number = x / 25;
  let countDivisors: number = 0;
  while (divisors > 1) {
      if (x % divisors === 0) {
          countDivisors += 1;
          break;
      } else {
          divisors = divisors - 1;
      }
  }
  ...
}
\end{verbatim}
  \caption{Snippet of a TypeScript program that mutates the constant variable \code{divisors}, generated during testing in grammar-constrained decoding mode by CodeLlama-7B at T=0.5 (see \Cref{sec:case-studies:types}).} \label{fig:typescript_immutable_variable}
        \end{subfigure}

    \caption{Sample incorrect outputs generated when not using semantic constrained decoding.}
    \label{fig:sample_outputs}
\end{figure}

Enforcing type safety prevents variable-name \changed{and function-name} hallucinations, eliminating many errors \changed{that appear in the GCD and unconstrained modes. We observe that  at high temperatures the unconstrained mode also struggles to correctly balance parentheses, and GCD often calls undefined functions when it wants to use forbidden language features (e.g., \code{newArray(...)} is used instead of \code{new Array} to try to construct an array).}
A more interesting \changed{(but rare)} case type-constrained decoding prevents is mutation of variables declared \lstinline{const}. 
Consider the code in \Cref{fig:typescript_immutable_variable}, produced with GCD.
The variable \lstinline{divisors} is declared constant on line 2, but it is mutated on line 9. 
In this case, the LM's mistake on line 2 does not cause an observable error until later in the program.
The type system implemented in our semantic pruner will prevent the assignment to \lstinline{divisors}, but it will still allow line 2.

When irresolvable conflicts between the LM's intention and the formal constraints of the decoder like \Cref{fig:typescript_immutable_variable} occur, the LM is typically unable to recover because it has already ``committed'' to a solution.
In the best case, this mistake results in continuations consisting of legal but useless code, such as whitespace or comments. In the absolute worst case, the LM will sometimes generate deeply nested arithmetic expressions that explode the size of our program space and slow our solver to the point of timing out.
In general, constrained decoding is known to skew the LM token distribution and how to mitigate this problem is an active area of research~\cite{gad}.
Such adjustments are orthogonal to our purposes, so we do not consider them.

\subsection{RQ2: Overhead}
\Cref{tab:realizability-timing} shows the average overhead of semantic constrained decoding (in ms) per generated token. 
Overhead on decoding time range from tens to a \changed{couple} hundred milliseconds per token, which is a very small price to pay for assurance provided by semantic constrained decoding.
As a reference, on our hardware, CodeLlama-13B takes on average 81 ms to produce a token when unconstrained.

\begin{table}[t]
\centering
\caption{Overhead of checking realizability in semantic constrained decoding (milliseconds/produced token).}
\label{tab:realizability-timing}
\resizebox{\textwidth}{!}{%
\begin{tabular}{@{}l|rrrrr|rrrrr|rrrrr@{}}
\toprule
\multirow{3}{*}{\shortstack{{Overhead in ms/token}}}
 & \multicolumn{5}{c|}{\textbf{DeepSeek-Coder-6.7b}} & \multicolumn{5}{c|}{\textbf{CodeLlama-7B}} & \multicolumn{5}{c}{\textbf{CodeLlama-13B}} \\
 & \multicolumn{5}{c|}{\textbf{Temperature}} & \multicolumn{5}{c|}{\textbf{Temperature}} & \multicolumn{5}{c}{\textbf{Temperature}} \\
 & 0.01 & 0.3 & 0.5 & 0.7 & 1.0 & 0.01 & 0.3 & 0.5 & 0.7 & 1.0 & 0.01 & 0.3 & 0.5 & 0.7 & 1.0 \\
\midrule
\textbf{Equiv No Delimit}  & 225 & 221 & 199 & 233 & 216 & 159 & 77 & 74 & 77 & 161 & 58 & 68 & 67 & 61 & 142 \\
\textbf{Equiv Delimit}  & 82 & 73 & 564 & 198 & 162 & 50 & 55 & 65 & 121 & 156 & 63 & 74 & 79 & 49 & 128 \\
\textbf{TypeScript}
& \changed{202} & \changed{204} & \changed{185} & \changed{161} & \changed{167} 
& \changed{85} & \changed{64} & \changed{101} & \changed{107} & \changed{165}
& \changed{116} & \changed{107} & \changed{129} & \changed{135} & \changed{173} \\
\bottomrule
\end{tabular}
}
\end{table}

\begin{figure}[t]
\centering
    \begin{subfigure}[b]{0.44\linewidth}
    \centering
        \includegraphics[width=0.97\linewidth]{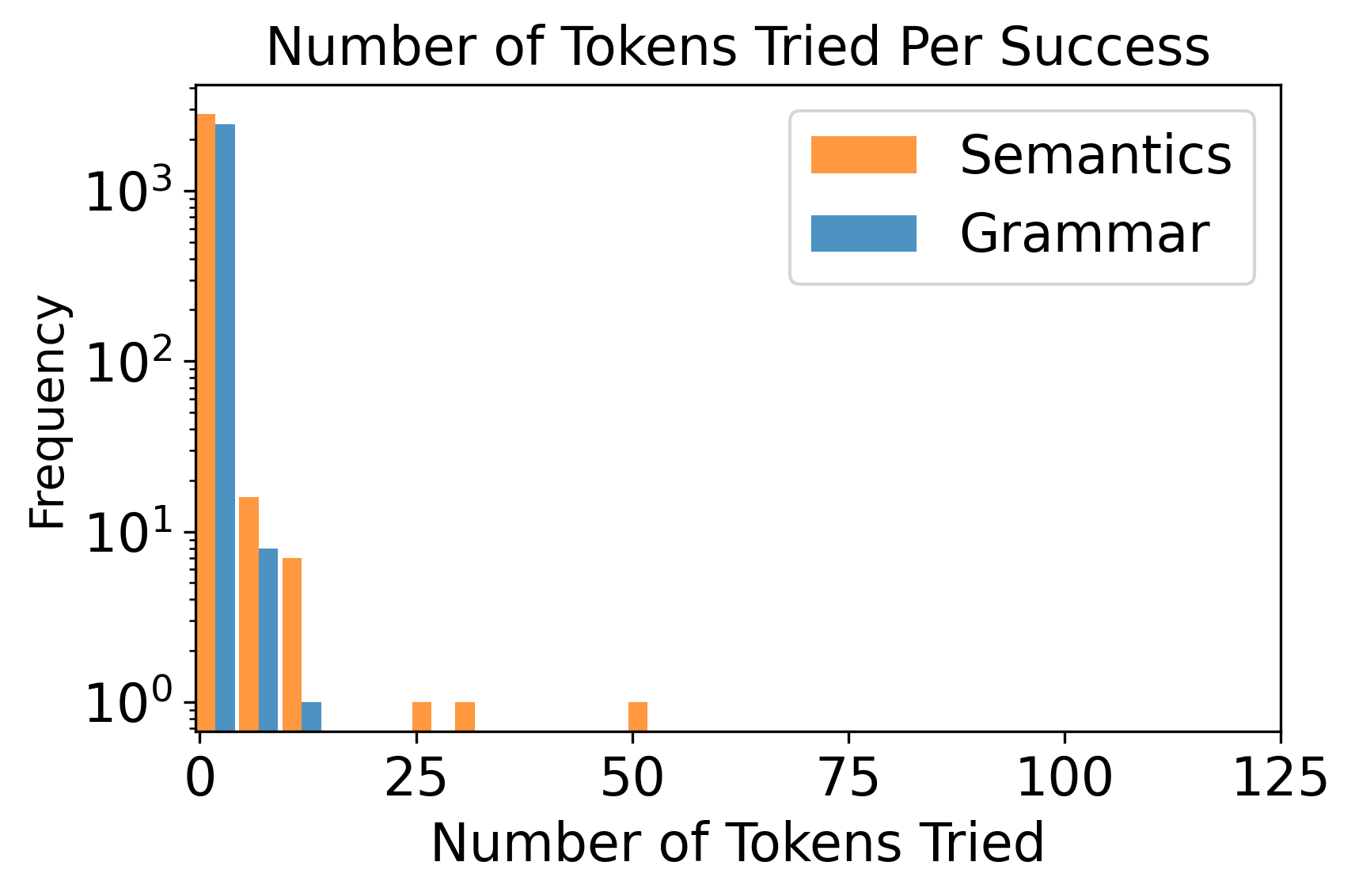}
        \caption{Equivalence, CodeLlama-7b, all temperatures}
    \end{subfigure}
    \hfill
    \begin{subfigure}[b]{0.44\linewidth}
    \centering
        \includegraphics[width=\linewidth]{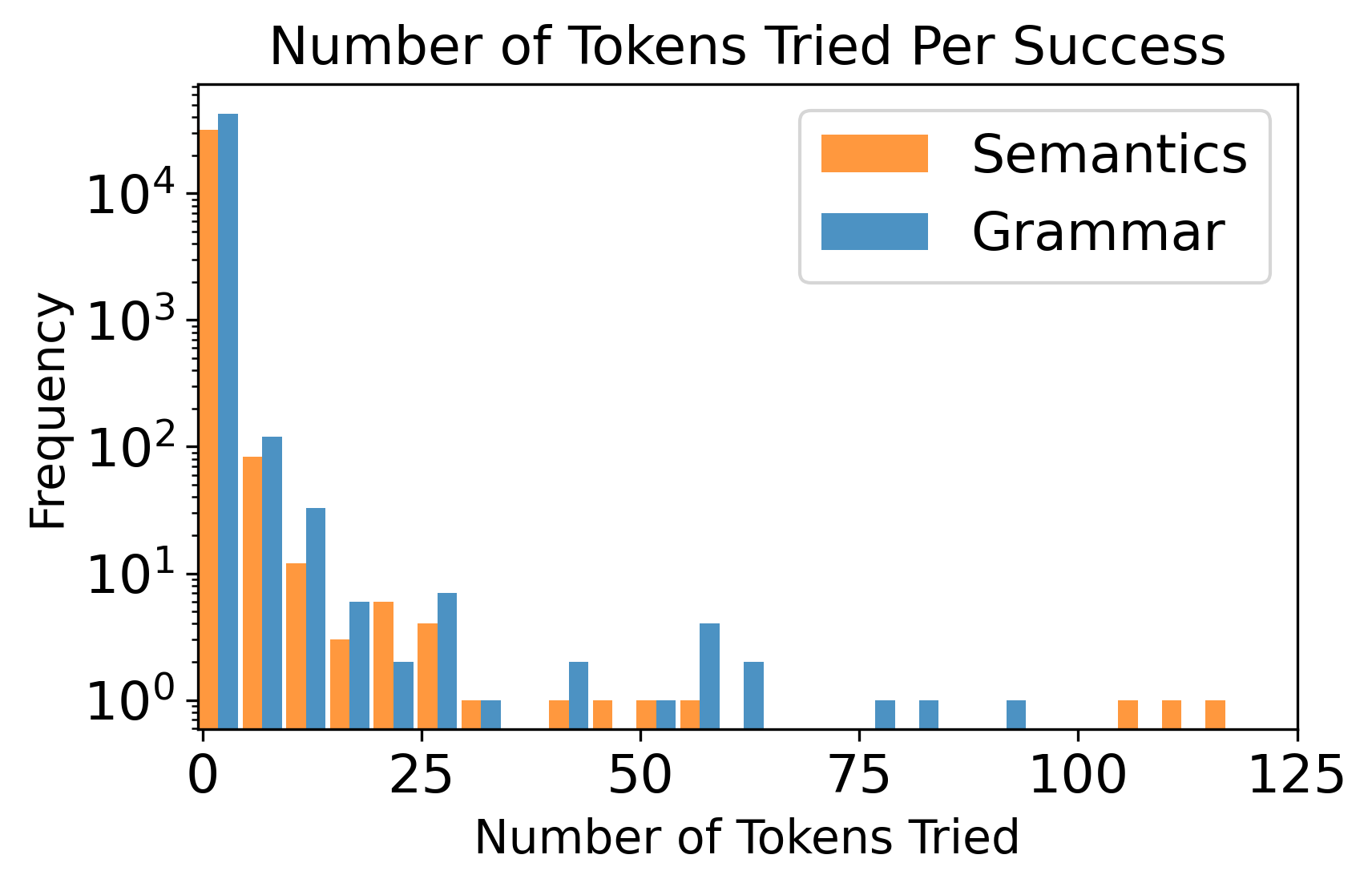}
        \caption{Typescript, CodeLlama-7b, all temperatures}
    \end{subfigure}
    \caption{Distribution of how many tokens were proven unrealizably by semantic constrained decoding to produce each individual token. The $k$th bar gives the number of successful tokens that were produced after trying between $5k$ and $5k+4$ unsuccessful tokens by CodeLlama-7b.}
    \label{fig:token_generation_charts}
\end{figure}

To better illustrate the source of the overhead, \Cref{fig:token_generation_charts} plots the distribution of the number of tokens that are tried before finding a realizable one, for CodeLLama-7B.
Results for other models can be found in \Cref{app:typescript_results}. 
In general, even with semantic constrained decoding the first token tried is accepted most of the time.
For instance, across all models, \changed{96.4}\% of the tokens produced during semantic constrained decoding for TypeScript pass on the first try (similarly, \changed{96.1}\% of tokens for GCD).






\section{Related Work} \label{sec:related_work}
\paragraph{Constrained decoding.}
Constrained decoding techniques ensure the output of language models meets a given specification by throwing away invalid next tokens at each step.
Grammar-constrained decoding, in which the constraint is given as a context-free grammar, has been well studied \cite{syncode, geng2023grammarconstrained,willard2023efficient,wang2023grammar,greatgramma}.
A significant portion of \name's runtime is spent pruning tokens that fail simple syntactic validity. Integrating fast syntactic filtering tools such as LLGuidance~\cite{llguidance} as a pre-processing step could reduce this cost by eliminating invalid tokens early.
Beyond syntax, more recent work has explored enforcing specific semantic constraints such as type safety \cite{typespaper}. 
Frameworks such as monitors \cite{monitors} and completion engines \cite{synchromesh} provide abstractions for providing more complex constraints by allowing a user to provide a monitor (written in a general-purpose language) that performs the decoding.  
The main difference between our work and those techniques is that they require the user to write checkers over strings.
By contrast, \toolname operates at the level of abstract syntax, abstracting away the syntactic component and allowing the user to define pruners at the level of program spaces, thus enabling new applications such as equivalence-guided decoding.

\paragraph{Algebraic approaches to parsing.}
We build on a long line of work viewing parsers and grammars as algebraic, recursive structures \cite{typedAlgebraicParsing,kleene}.
\citet{parsingWithDerivatives} presents a functional approach to parsing based on applying Brzozowski derivatives to parser combinators.
Zipper-based variants such as ~\cite{parsingWithZippers, zippy} reduce redundant traversals in the basic version of parsing with derivatives to improve efficiency.
Integrating these techniques into our implementation could be another avenue to improve performance.
\paragraph{Regular coinduction.}
Our implementation relies on regular coinduction to represent and manipulate cyclic program spaces. CoCaml~\cite{cocaml,cocaml_foundation} is a framework to unambiguously define and compute recursive functions over regular codata. Because some recursive functions admit more than one interpretation on codata, the CoCaml language allows users to define custom solvers to implement their desired semantics. We do not use the CoCaml language directly. However, our Python backend handles the computation of corecursive functions, which produce codata (e.g., our pruners), with a solver analogous to the corec solver presented in \cite{cocaml}. Our backend's solver for \code{@fixpoint}-annotated functions, which compute concrete values over regular codata, is analogous to the fixpoint solver presented in \cite{cocaml}. 


\paragraph{Well-behaved corecursion.}
\changed{As discussed in \Cref{sec:coinduction-intro}, \toolname does not prevent users from writing corecursive functions that will not terminate.
Finding syntactic conditions for well-behaved corecursive functions is challenging in general.
One closely related body of work is on techniques for determining \textit{productivity} of corecursive functions (i.e., corecursive functions that are guaranteed to always eventually produce more of the infinite term they are computing~\cite{guarded_corecursion}). Famously, a sufficient condition for productivity is \emph{guarded corecursion}, which stipulates that every recursive call to a function must appear under a datatype constructor~\cite{guarded_corecursion}. This condition is often too restrictive, and a number of variants have been introduced to weaken it~\cite{beyond_guarded_corecursion1,beyond_guarded_corecursion2}.
In our case, productivity is necessary (but not sufficient) for regularity or termination of our CoCaml-like backend. For example, \code{f: int -> Stream} where \code{f(j) = j::f(j+1)} is a guarded function, and hence productive, but it is not regular and will not terminate in our backend.
}

\paragraph{Unrealizability and pruning in synthesis.}
Our approach draws inspiration from the concept of unrealizability---the problem of determining whether no solution exists that satisfies a given specification \cite{unrealizability2019}.
Existing approaches to proving unrealizability~\cite{unrealizability2019,unrealizability2020} typically focus on specific synthesis domains, leveraging domain insights to solve particular tasks.
Pruning, for example based on types~\cite{myth} or examples~\cite{sygus}, to remove infeasible portions of the search space is a well-established technique in program synthesis.
Our work provides a framework to adapt these general methods from traditional synthesis towards constraining the output of LLMs.

\changed{
\paragraph{Abstract interpretation.}
A natural question to ask is whether analyses that use abstract interpretation~\cite{abstractinterpretation} to detect invalid programs can be encoded as pruners in \toolname. In general, the complexity of the abstract domain affects how one can connect programs to abstract values.

When the set of abstract predicates is \textit{finite}, \citet{blaze} have shown how to connect programs in a search space with their corresponding abstract values using abstract finite tree automata (AFTA). In an AFTA, each state represents a predicate or abstract value, and the trees derivable from that state correspond exactly to the programs to which the abstract interpretation will assign the state’s abstract value. We can encode an AFTA by defining, for each state, a pruning function that encodes the transitions to that state. The example we used in Section~2 involving odd/even predicates is already a (simple) instantiation of such an encoding.

For more complex \textit{infinite} abstract domains (e.g., numerical ones), assigning abstract values to parts of the program space becomes harder. Prior work in synthesis and unrealizability~\cite{unrealizability2020} shows that grammar-flow analysis (GFA)~\cite{gfa} provides a systematic solution to do so. GFA analyzes programs in a regular tree grammar (in our case, the program space) by performing dataflow analysis and assigning abstract values from a semilattice to each nonterminal. In a GFA, abstract transformers assigned to individual operations are lifted to grammars by interpreting the unions of productions as joins in the semilattice. By solving the “abstract” equations resulting from these interpretations, one can obtain a representation of the program space that matches the given abstract interpretation. While solving these equations can be challenging, prior work has identified cases where doing so is feasible~\cite{monotonic}.
}
\paragraph{Bidirectional Typing}
In a bidirectional typesystem \cite{bidirectional}, type checking and type inference are done simultaneously. As noted in \Cref{sec:case-studies:types}, this style of typechecking can be naturally expressed as a pruner in \toolname. Type checking corresponds to describing how to build, via a corecursive function, a well-typed space of programs from well-typed component subspaces. Type inference corresponds to performing an analysis on a space of programs via a fixpoint computation.

\section{Conclusion} \label{sec:conclusion}

We introduced \name, a new framework for semantic constrained decoding that 
allows one to impose semantic constraints directly on the abstract syntax trees representing programs (instead of their string syntax).
\name lets users
program constraints by providing \textit{semantic pruners}---recursive program operating over finite representations of the infinitely many programs the LM can produce on a given prefix.
This flexibility enables new applications---e.g., constraining LMs to only output programs that are equivalent (up to rewrite rules) to a given input program.

\name bridges the gap between formal methods and the outputs of language models. Looking forward, there are numerous opportunities to enhance the framework. 
\changed{In terms of programmability, users of \toolname must directly define a pruner for the property they wish to enforce as a corecursive function on program spaces.
Moreover, this function must return a space that is regular.
There may be a better interface for programming pruners that abstracts away these details.
In terms of performance, the current implementation of \toolname is relatively unoptimized, incurring noticeable per-token overhead: future work could likely greatly improve the runtime performance.
}
Exploring new backends for reasoning about codata, such as term rewriting solvers~\cite{egg,egglog} or constrained Horn clauses~\cite{chc}, could enable new applications. Finally, we envision better combinators for expressing semantic pruners and for automatically proving whether pruners over-approximate or under-approximate a user-provided constraint.











\begin{acks}
The project is supported in part by a Microsoft Faculty Fellowship; a UCSD JSOE Scholarship; and NSF under grants \grantnum{00003}{CCF-2422214}, \grantnum{00003}{CCF-2506134}, \grantnum{00003}{CCF-2446711}. This material is based upon work supported by the National Science Foundation Graduate Research Fellowship Program under Grant No. \grantnum{00003}{DGE-2038238}.
Any opinions, findings, and conclusions or recommendations expressed in this publication are those of the authors, and do not necessarily reflect the views of the sponsoring entities.
Loris D’Antoni holds concurrent appointments as a Professor at the University of California San Diego and as an Amazon Scholar. This paper describes work performed at the University of California San Diego and is not associated with Amazon.
\end{acks}

\section*{Data-Availability Statement}
The code for this paper is available on \href{https://github.com/large-loris-models/chopchop}{GitHub}.
An artifact containing a Docker image is available on Zenodo~\cite{artifact}.

\bibliography{main}
\newpage
\appendix
\section{Appendix}
\subsection{Soundness and Completeness Proofs}\label{app:proofs}
\changed{We prove the main results of \Cref{sec:soundness-completeness}. We first prove \Cref{lem:cd_sound}:

\begin{proof}
    Let $\llm$, $\stringConstraint$, and $\realchecker_\stringConstraint$ be given. Suppose \Cref{alg:constrained-decoding} returns a string $\omega' \in \alphabet\st$ which does not satisfy $\stringConstraint$. The only way for $\omega'$ to be returned is from Line 7. But we check that the output of Line 7 satisfied $\stringConstraint$ on Line 6! By contradiction, no string $\omega' \in \alphabet\st$ can be returned which does not satisfy $\stringConstraint$.
\end{proof}

We next prove \Cref{lem:cd_complete}. The proof is identical for top-p and top-k sampling:
\begin{proof}
    Suppose $\omega$ is a valid solution that can be found with non-zero probability in unconstrained (top-p/top-k) decoding. Consider an execution of unconstrained decoding that occurs with non-zero probability and returns a solution $\omega$ that satisfies $\stringConstraint$. Such an execution is guaranteed to exist because there are only finitely many executions that can return $\omega$. Let $\alpha = (\alpha_1, \dots, \alpha_n = \omega\endtok)$ be the sequence of elements dequeued during the execution. Since the search is non-backtracking (enqueueing $\omega\tau$ deletes all elements that don't start with $\omega$), every $\alpha_j$ is a prefix of $\alpha_{j+1}$. In particular, every $\alpha_j$ for $j < n$ is realizable. If $\realchecker_\stringConstraint$ is complete, then this exact execution is also possible in the constrained case with the same probability.

    The probability of producing $\omega$ in the unconstrained case is the sum of the probabilities of producing each of the (finitely many) executions that produce $\omega$. Since constrained decoding can produce every execution that the unconstrained case can with the same probability, the probability that unconstrained decoding produces $\omega$ is at least as large as the probability that unconstrained decoding produces $\omega$.
\end{proof}

Note that \Cref{lem:cd_complete} also applies to greedy decoding (i.e., top-k sampling for k=1).
}
\subsection{Lexing} \label{app:lexing}
We open with a brief background on maximal munch lexing, the most widely used lexing formalism.

\paragraph{Maximal Munch Lexing}
A maximal munch lexer~\cite{compilers_dragon_book} is instantiated by a collection of disjoint regular expressions, each of which corresponds to a different kind of lexeme. For example, we might give the regex \code{[1-9][0-9]*} to describe the set of strings representing integers, and the regexes \code{true} and \code{false} to describe the strings encoding keywords true and false, respectively. We call these regexes \emph{lexeme classes}.

Given such a collection of disjoint regexes, a \emph{lex} of a string $\omega$ is a partition $(\omega_1, \dots, \omega_n)$ of $\omega$ so that each $\omega_j$ matches one of the given regexes. These strings $\omega_j$ are called \emph{lexemes}. The \emph{maximal munch lex} of $\omega$ is the unique lex so that, for any other lex $(\omega_1, \dots, \omega_j, \omega_{j+1}', \dots \omega_m')$, we have $\abs{\omega_j} \geq \abs{\omega_{j+1}'}$.

\paragraph{Lexing a Partial Program}
Given a string $\omega$ that represents a partial program, our goal is to construct a representation of set of maximal munch lexes of all the strings that extend $\omega$. This will be the output we pass to the parser. For example, the string \code{1 + 3} can be extended to \code{1 + 34}, whose maximal munch lex is \code{[1, +, 34]}, or \code{1 + 3 + 4}, whose maximal munch lex is \code{[1, +, 3, +, 4]}. To represent the set of maximal munch lexes of completions of $\omega$, we will build a \emph{partial lex} $L$ like the following:
$$L = \{\mcode{[1, +, 3]}, \mcode{[1, +, 3[0-9]+]}\}$$
Each element of $L$ is a \emph{lexical prefix} -- a sequence of lexemes that ends in a regex. A lexical prefix describes the set of prefixes that match it. So
\code{[1, +, 3]} describes the lexes \code{[1, +, 3, +]}, \code{[1, +, 3, 5, 6]}, etc. And \code{[1, +, 3[0-9]+]} describes the lexes \code{[1, +, 31, +]}, \code{[1, +, 354, 6]}, etc. The partial lex $L$ describes the set of lexes that extend any one of its lexical prefixes.
Our goal will be to produce $L$ from $\omega$ so that the lexes described by $L$ are exactly the longset match lexes of completions of $\omega$.

To build $L$ incrementally, we introduce a richer representation of $L$ that tells us how much of a regex has already been matched by $\omega$. For example, if I had a ``print'' lexeme, then $$L = \{\mcode{[print]}\}$$ could be the partial lex of both $\omega = \text{``print''}$ and $\omega = \text{``pri''}$. To resolve this, we add a $@$ annotation that tells us how much of a regex has been matched. Then, I can distinguish \code{[print @]} from \code{[pri @ nt]} Left of the $@$ is a string that has been explicitly matched by the end of $\omega$, and right of the $@$ is a regex that has not been matched yet.

To advance an annotated regex by a character $a$, we define $D_a(\mcode{x @ y}) = \mcode{xa @ }D_a(\mcode{y})$
, where $D_a(\mcode{y})$ is the usual Brzozowski derivative of \code{y} with respect to $a$~\cite{Brzozowski}. For example,
$$D_i(\mcode{pr @ int}) = \mcode{pri @ nt}$$

Given $\omega$, we will incrementally build a \textit{lexer state}, a set of sequences of such annotated regexes \code{x @ y} like
$$\lexerstate = \{(\mcode{pri @ nt})\}$$
that projects to the desired $L$ when annotation symbols $@$ are removed.





The full algorithm to build $L$ is given in \Cref{alg:lexing}. It very closely mirrors the na\"ive approach to maximal munch lexing~\cite{compilers_dragon_book}. We iterate through $\omega$ character by character, constructing the lexer state $\lexerstate$ incrementally. As we go, we discard the partial lexes from $\lexerstate$ that fail maximal munch by not having the largest possible tokens from left to right. At the end of the algorithm, we remove the pointers and throw away ``ignorable'' tokens (e.g., whitespace and comments) to convert $\lexerstate$ into $L$.

\begin{figure*}
\begin{algorithm}[H]
\caption{Partial Lexing ($\lexset$)}\label{alg:lexing}
\begin{algorithmic}[1]
\Procedure{$\lexset$}{$\omega \in \Sigma^*$}
    \State $\lexerstate = \growlexerstate(\omega)$
    \State $L' = \rmvpointers(\lexerstate)$
    \State $L = \rmvignores(L')$
    \State \Return $L$
\EndProcedure
\State
\State \text{@memoize}
\Procedure{$\growlexerstate$}{$a_1 \dots a_n \in \Sigma^*$}
    \If{$n = 0$}
        \State \Return $\{()\}$
    \Else
        \State $\lexerstate \gets \growlexerstate(a_1 \dots a_{n-1})$
        \State $\lexerstate \gets \extendlexes(\lexerstate, a_n)$
        \State $\lexerstate \gets \munch(\lexerstate)$
        \State \Return $\lexerstate$
    \EndIf
\EndProcedure
\State
\Procedure{$\extendlexes$}{$\lexerstate : LEXERSTATE$, $a \in \Sigma$}
    \State $\text{result} \gets \emptyset$
    \For{each $l = [l_1, \cdots, l_m] \in \lexerstate$}
        \If{$m = 0$}
            \State $\text{result} \gets \text{result} \cup \{(D_a(@ \top_\tau)) \mid D_a(\top_\tau) \neq \bot_\tau\}$
        \EndIf
        \If{$\epsilon \in y_m$}
            \State $\text{result} \gets \text{result} \cup \{[l_1, \cdots, l_{m-1}, x_m @ \epsilon, D_a(@ \top_\tau)] \mid D_a(\top_\tau) \neq \bot_\tau\}$
        \EndIf
        \If{$D_a(l_m) \neq \bot_\tau$}
            \State $\text{result} \gets \text{result} \cup \{[l_1, \cdots, l_{m-1}, D_a(l_m)]\}$
        \EndIf
    \EndFor
    \State \Return $\text{result}$
\EndProcedure
\State
\Procedure{$\munch$}{$\lexerstate : LEXERSTATE$}
    \For{each $l = [l_1, \cdots, l_m] \in \lexerstate$}
        \If{$\exists l' = [l_1, \dots, l_{k-1}, l_k', \dots, l'_{m'}] \in \lexerstate. ~(\abs{x_k} < \abs{x_k'} \land \epsilon \in y_k')$ }
            \State $\lexerstate.pop(l)$
        \EndIf
    \EndFor
    \State \Return $\lexerstate$
\EndProcedure
\end{algorithmic}
\end{algorithm}
\caption{The function $\growlexerstate$ produces a lexer state $\lexerstate$ for $\omega$ by iteratively extending partial lexes by the next character ($\extendlexes$) and discarding partial lexes which fail maximal munch by not having the largest possible tokens from left to right ($\munch$). The outermost function $\lexset$ turns $\lexerstate$ into $L$. Above, $\Sigma$ represents an alphabet of characters. We memoize $\growlexerstate$ so that, when a prefix $\omega$ grows to $\omega\alpha$, computing $\growlexerstate(\omega\alpha)$ will reuse the earlier computation of $\growlexerstate(\omega)$.}
\end{figure*}

If there is a reserved whitespace character (e.g., \text{' '}) and a class of lexemes that matches a single occurrence of that character (e.g., \code{\s+}), then the lexer state $L$ that our algorithm produces describes exactly the set of maximal munch lexes of completions of $\omega$, up to the presence/absence of ignorable tokens like whitespace.

\begin{theorem} [Correctness of Partial Lexing] \label{thm:lexing_is_correct}
    Suppose every string in $\alphabet\st$ is expressible by some token sequence in $\tokens\st$.
    Suppose the regexes that each language token matches are disjoint.
    Suppose $\lex$ is a maximal munch lexer.
    Suppose there is an whitespace character in our language which matches the ignorable lexeme class and appear in no lexeme of any other class.
    Then for every LM token sequence $\omega \in \tokens\st$, the set of sequences of language token predicates $L = \lexset(\omega)$ corresponds exactly to the maximal munch lexes of continuations of $\omega$ with whitespace tokens removed (i.e., $\semlexerstateprog{L} = \{\lex(\omega\beta) \mid \beta \in \tokens^*\}$).
\end{theorem}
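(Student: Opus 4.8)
The plan is to prove the theorem by first establishing a strengthened invariant about the auxiliary recursion $\growlexerstate$ by induction on $|\omega|$, and then obtaining the statement by peeling off the two post-processing steps $\rmvpointers$ and $\rmvignores$ that $\lexset$ applies on top of $\growlexerstate$. (The \texttt{@memoize} annotation on $\growlexerstate$ affects only efficiency, not the computed value, so it can be ignored.)

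First I would make precise the notion of a \emph{maximal-munch prefix} of a string $\omega$: an annotated lexeme sequence $[x_1 @ y_1, \ldots, x_m @ y_m]$ with $x_1 \cdots x_m = \omega$, in which each $x_i @ y_i$ is a Brzozowski-derivative state of some lexeme class (so $x_i$ is a prefix of a string in that class, $y_i$ its residual regex, and $y_i \neq \bot$), with $y_i = \epsilon$ and $x_i$ a complete lexeme for every $i < m$, and such that there exists a lexable completion $\omega\beta$ whose maximal-munch lex has the form $[x_1, \ldots, x_{m-1}, u_m, u_{m+1}, \ldots, u_n]$ where $u_m$ begins with $x_m$ and belongs to the class of $x_m$. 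The central lemma is that $\growlexerstate(\omega)$ equals exactly the set of maximal-munch prefixes of $\omega$. Given this lemma, the theorem follows: $\rmvpointers(\growlexerstate(\omega))$ is the set of lexical prefixes $[x_1, \ldots, x_{m-1}, r_m]$ obtained by dropping the $@$ marker and replacing the last state by the regex for class-of-$x_m$ strings beginning with $x_m$; by the definition of $\semlexerstateprog{\cdot}$, the lexes it describes are precisely the $[u_1, \ldots, u_n]$ with $u_i = x_i$ for $i < m$ and $u_m \in r_m$ completed arbitrarily, which by the lemma are exactly the maximal-munch lexes of lexable completions of $\omega$ (the token-expressibility hypothesis is used here to identify completions $\omega\beta$ with $\beta \in \tokens^*$ with arbitrary string continuations of $\omega$). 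Applying $\rmvignores$ then strikes the ignorable (whitespace/comment) lexemes from the lexical prefixes and, correspondingly, from the described lexes, giving $\semlexerstateprog{L} = \{\lex(\omega\beta) \mid \beta \in \tokens^*\}$ modulo whitespace tokens, as claimed.

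The lemma is proved by induction on $|\omega|$. The base case $\omega = \epsilon$ is immediate, since $\growlexerstate(\epsilon) = \{()\}$ and the empty lexical prefix is extended by every lex, i.e. is the unique maximal-munch prefix of $\epsilon$. For the inductive step, $\growlexerstate(\omega a) = \munch(\extendlexes(\growlexerstate(\omega), a))$, and I would argue two things. (i) $\extendlexes$ computes exactly the ways to consume one more character $a$ from a maximal-munch prefix of $\omega$: either extend the in-progress last lexeme via its derivative; or, when that lexeme is nullable, commit it (resetting its residual to $\epsilon$) and open a fresh lexeme of each class admitting $a$; or, when the prefix is empty, open the first lexeme. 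That this manipulation correctly tracks ``matched so far'' rests on the identity $D_a(x @ y) = xa @ D_a(y)$, and it preserves derivative-reachability and the constraint that matched parts concatenate to $\omega a$. (ii) $\munch$ deletes precisely the non-maximal-munch prefixes. Here the key observation is that $\munch$ removes a sequence $l$ iff some sibling sequence agrees with $l$ on the first $k-1$ committed lexemes but takes a strictly longer \emph{complete} lexeme $x_k'$ at position $k$; since both sequences partition a prefix of $\omega a$, this longer $x_k'$ lies entirely inside $\omega a$ and strictly contains $x_k$, so in every completion of $\omega a$ the greedy lexer, having emitted $x_1 \cdots x_{k-1}$, is faced with at least $x_k'$ and therefore can never emit $x_k$ at position $k$ --- hence $l$ is doomed. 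A short counting argument (both sequences consume the same residual characters of $\omega a$ after the shared prefix) shows this case can only fire for committed positions $k < m$, never for the in-progress last lexeme. Conversely, a surviving $l$ is genuinely realizable: extend $x_m$ by the shortest string completing it to a lexeme of its class (possible since $y_m \neq \bot$), then append one whitespace character; by the whitespace-exclusivity hypothesis no non-whitespace class can absorb that whitespace and thereby overtake a committed lexeme or the final lexeme, so the maximal-munch lex of this completion is exactly $l$ extended, certifying $l$ as a maximal-munch prefix.

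I expect step (ii) to be the main obstacle. The ``doomed'' direction requires a careful argument that the greedy lexer's behaviour on the shared character region is forced regardless of the completion, and the converse requires the explicit construction of $\beta$ together with the whitespace hypothesis to verify that the constructed completion's maximal-munch lex really extends the given prefix --- in particular that completing $x_m$ minimally and appending whitespace does not let a longer match, or a match of a different (disjoint) lexeme class, silently take over. Formulating the invariant in exactly the form above, so that both directions of (ii) close simultaneously under the induction hypothesis, is the crux of the proof.
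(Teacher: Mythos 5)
Your proposal follows essentially the same route as the paper's proof: an induction on the length of $\omega$ over $\growlexerstate$ (corresponding to the paper's Property~1, that left-hand sides concatenate to $\omega$, and Property~2, that the maximal-munch lex of every continuation is matched by exactly one surviving lexical prefix), combined with the same whitespace-insertion construction to witness realizability of surviving prefixes and the same peeling-off of $\rmvpointers$ and $\rmvignores$ at the end. The only difference is packaging: you fold the realizability of survivors of $\munch$ into a single exact-characterization invariant proved inside the induction, whereas the paper establishes only the completeness direction inductively and discharges the converse inclusion in a short non-inductive step afterwards using the same whitespace argument.
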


\begin{proof}
    We first prove two properties of $\growlexerstate$.
    \begin{lemma} [Property 1]
        The concatenation of the left-hand sides of every lexical prefix in $\growlexerstate(\omega)$ is exactly $\omega$.
    \end{lemma}
    \begin{proof}
        The proof proceeds via straightforward induction on the length of $\omega$.
        When $\omega = \epsilon$, we have $\tilde{L} = \{()\}$.
        When $\omega' = \omega a$, we know that every lexical prefix in $\tilde{L}$ on line 12 matches left-hand-sides with $\omega$.
        Each partial lex added by $\extendlexes$ takes a derivative in $a$ of a lexical prefix from the input $\tilde{L}$.
        So, each partial lex in the output matches left-hand-sides with $\omega'$.
    \end{proof}

    \begin{lemma} [Property 2]
        For every $\omega$ and every $\omega' \succeq \omega$, the maximal munch lex of $\omega'$ is matched by exactly one lexical prefix in $\growlexerstate(\omega)$, including whitespace.
    \end{lemma}
    \begin{proof}
        The second property is proven by induction on the length of $\omega$ as well.
        When $\omega = \epsilon$, we have $\tilde{L} = \{()\}$ and the lex of any continuation of $\omega$ matches $()$.
        When $\omega' = \omega a$, some prefix of length $m$ in $\tilde{L}$ on line 12 matches the lex of $\omega'$.
        If the $m$th token of $\omega'$ matches the lhs of the $m$th token of the lexical prefix exactly, then line 21 or line 23 adds a prefix $l$ with $m+1$ tokens that matches $\omega'$.
        Similarly, if the $m$th token of $\omega'$ is longer than the lhs of the $m$th token of the lexical prefix exactly, then line 25 adds a prefix $l$ with $m$ tokens that matches $\omega'$.
        Regardless, this prefix $l$ cannot be removed by $\munch$.
        If $l$ were removed, this would imply that one can write some other lexical prefix $l'$ whose left-hand-sides match $\omega'$ (by Property 1) but which is a maximal munch of $l$. But this would imply that our lex of $\omega'$ is not a maximal munch lex of $\omega'$ since there exists a better tokenization, a contradiction.
        So $\omega'$ matches at least one lexical prefix in $\tilde{L}$.
        Moreover, $\omega'$ matches at most one lexical prefix in $\tilde{L}$ because, for any two matches of $\omega'$, one would necessarily munch the other.
    \end{proof}

    The final result follows from Properties 1 and 2.
    When we remove the annotations and whitespace from whichever lexical prefix matches the whitespace-containing $\omega' \succeq \omega$, we get a lexical prefix matching the whitespace-lacking lex of $\omega'$.
    So $\semlexerstateprog{L} \supseteq \{\lex(\omega\beta) \mid \beta \in \tokens^*\}$.
    For the other direction, let $l =[\pltoken_1,\dots,\pltoken_n]$ be a lex that matches some lexical prefix in $\semlexerstateprog{L}$.
    Since that lexical prefix has left-hand-sides matching $\omega$, we know that $\omega$ must match a prefix of $l$ up to whitespace.
    Suppose $$\omega = \pltoken_1 \;\; \pltoken_2\pltoken_3 \;\; \pltoken_4 \dots \pltoken_{m-1} \;\; a$$ where $ab = \pltoken_m$.
    Then we can construct a completion of $\omega$ that lexes to $l$ by simply putting whitespace between each of the tokens following $\pltoken_m$: $$\omega' = \omega b\;\;\pltoken_{m+1}\;\;\pltoken_{m+2}$$
    So $\semlexerstateprog{L} \subseteq \{\lex(\omega\beta) \mid \beta \in \tokens^*\}$.
\end{proof}

\subsection{TypeScript Typechecking}
\subsubsection{Details of Type Pruning}\label{app:typescript}
\changed{The main idea of our typepruner is presented in \Cref{sec:case-studies:types}. Here, we provide our grammar, selected inference rules for single programs, and selected parts of the implementation of \typerestrict. We also provide some additional information on our use of bounded types.}

We present our typescript grammar in \Cref{fig:typescript_grammar}. Typing rules infering the types of individual TypeScript programs are presented in \Cref{fig:typescript_single_program_inference_rules_expressions,fig:typescript_single_program_inference_rules_statements}. We use a terse, inference-only typesystem for individual programs that is used when pruning sets of programs. When we write, e.g., $\Gamma \vdash e \implies \texttt{bool}$, we mean that $e$ infers to a type that matches \texttt{bool}. Similarly, when we say $\Gamma \vdash s \implies \tau - ()$, we mean that $s$ infers to a subtype of $\tau$ that does not contain the unit type. Note that our typing contexts $\Gamma$ assign types and a designation of mutable/immutable to variables.
A much cleaner type system is given in \citet{typescript_good_typesystem}, but this one suffices for our purposes. We borrow from \citet{typescript_good_typesystem} the notion that statements type to their return types.

\changed{Part of the implementation of \typerestrict is given in \Cref{fig:type_pruner-example-implementation}. As mentioned in \Cref{sec:case-studies:types}, most cases like \code{Add} and \code{Ternary} can be handled by pruning and combining subterms. For cases like \code{Apply}, we prune \code{f} to have type \code{? -> tau} until \code{f} collapses to a single AST. Once \code{f} collapses, we can infer its type and typeprune \code{x} appropriately.}

In \Cref{sec:case-studies:types}, we promised to explain our use of bounded types. Recall the function application rule from \Cref{sec:case-studies:types}. There is a flaw in the behavior of that rule if we allow types of arbitrary size. If \code{f} unrolls to an application of itself (\code{Apply (f _)}), then the coterm that \code{(type_prune gamma f (T -> tau))} represents may not be regular. In order to compute 
\begin{lstlisting}
(type_prune gamma f (T -> tau))
\end{lstlisting}
we must eventually type-prune a case of the form \code{Apply ( f a )} against the type \code{(T -> tau)}. If \code{f} is incomplete, then our definition of \typerestrict, asks us to compute 
\begin{lstlisting}
(type_prune gamma f (T -> T -> tau)).
\end{lstlisting}
This process continues infinitely, generating larger and larger types. Because infinitely many distinct calls to \typerestrict are made, we cannot represent the result as a regular coterm.

One solution is to simply return \code{(Apply f x)} when \code{f} is incomplete. This would mean that we do not typecheck a function until it is complete. For the purposes of constrained decoding, this is unsatisfactory; we ought to put some kind of restriction on the LM's behavior as it generates the left-hand-sides of function applications. Worse, since the TypeScript syntax does not tell us whether an expression is the left-hand side of a function application until it is applied, many otherwise invalid expressions can fail to typecheck if we leave function calls uninspected. Instead, we simply bound the our types to depth 3. In our example above, the type \code{T -> T -> T -> tau} becomes \code{T -> T -> T}. This prevents \typerestrict from constructing infinitely many types and makes our analysis regular. However, we do sacrifice soundness for programs with deeply nested types (i.e., \typerestrict may accept a singleton set with an ill-typed program if trying to typecheck the program generates deeply-nested types). We did not find any such large types appearing in the LLM generated programs in practice.

\changed{The challenges in handling function application also explain why we implemented an overapproximate pruner, rather than an underapproximate pruner. For an underapproximate pruner, the obvious sound way to handle \code{(Apply f x)} is to consider a finite subset of the possible combinations of types (e.g., by only considering types up to some size bound, or inferring the possible types of \code{f}). While we do bound the size of our types to prevent nonterminating behavior, pruning \code{f} and \code{x} for each valid combination of types is expensive. More than that, it is unnecessary. Since LMs generate programs from left to right, \code{f} will always resolve to a concrete function before any of \code{x} is written. This means that we will always be able to infer the type of \code{f} when we start writing \code{x}. As a result, the only overapproximation that matters to constrained decoding is bounding the size of types. As explained above, we overapproximate types with depth greater than 3 (e.g., \code{((int -> ...) -> int) -> int}) by types of depth 3 (e.g., \code{((T -> int) -> int) -> int}). We did not find any such large types appearing in the generated programs in practice.}
\newpage
\begin{figure}
    \centering
\begin{lstlisting}
type_prune :: Environment -> Type -> ProgramSpace -> ProgramSpace

type_prune env tau Empty =
    Empty

type_prune env tau (Union [a1, a2, ...]) =
    Union [(type_prune env tau a1), (type_prune env tau a1), ...]

type_prune env tau (Int x) =
    if int <= tau then (Int x) else Empty
    
type_prune env tau (Add e1 e2) =
    if int <= tau
    then Add (type_prune env e1 Int)
              (type_prune env e2 Int)
    else Empty
    
type_prune env tau (Ternary b s1 s2) =
    Ternary (type_prune env bool b)
        (type_prune env tau s1)
        (type_prune env tau s2)
        
type_prune env tau (Apply f x) = 
    case (collapse f) of
        Some s ->
            let tau1 -> tau2 = (type_infer env s)
            in 
              if tau2 <= tau 
              then Apply f (type_prune env argType x)
              else Empty
        Nothing ->
            Apply (type_prune env (T -> tau) f) x
...

typepruner = (type_prune {} T)
\end{lstlisting}
    \caption{\changed{Some selected rules for the corecrusive \typerestrict function. Given a typing environment, a target type, and a program space, \typerestrict overapproximates the set of programs in the program space that typecheck to the target type under the typing environment. The top-level typepruner \code{typepruner} is implemented in terms of \typerestrict.}}
    \label{fig:type_pruner-example-implementation}
\end{figure}

\begin{figure}[ht]
\centering
\begin{minipage}[t]{0.45\textwidth}
\centering
\textbf{Statement Grammar} \\
\[
\begin{array}{ll}

\text{Statements} \to & \\ 
                         & \text{\hspace{-7em}Statement} \\
                         & \text{\hspace{-7em}| Statement Statements} \\

\\
\text{Statement}  \to & \\ 
                    & \text{\hspace{-7em}Assignment ;} \\
                    & \text{\hspace{-7em}| Exp ;} \\
                    & \text{\hspace{-7em}| return Exp ;} \\
                    & \text{\hspace{-7em}| Block} \\
                    & \text{\hspace{-7em}| function Var () : Type Block} \\
                    & \text{\hspace{-7em}| function Var ( Params ) : Type Block} \\
                    & \text{\hspace{-7em}| for ( Assignment ; Exp ; Reassignment ) Block} \\
                    & \text{\hspace{-7em}| while ( Exp ) Block} \\
                    & \text{\hspace{-7em}| if ( Exp ) Statement else Statement} \\
                    & \text{\hspace{-7em}| if ( Exp ) Statement} \\

\\
\text{Assignment}  \to & \\ 
                       & \text{\hspace{-7em}let Var : Type = Exp} \\
                       & \text{\hspace{-7em}| const Var : Type = Exp} \\
                       & \text{\hspace{-7em}| let Var = Exp} \\
                       & \text{\hspace{-7em}| const Var = Exp} \\
                       & \text{\hspace{-7em}| Reassignment} \\

\\
\text{Reassignment}  \to & \\ 
                         & \text{\hspace{-7em}Var = Exp} \\
                         & \text{\hspace{-7em}| Var ++} \\
                         & \text{\hspace{-7em}| ++ Var} \\
                         & \text{\hspace{-7em}| Var += Exp} \\

\\
\text{Type}  \to & \\ 
                 & \text{\hspace{-7em}number} \\
                 & \text{\hspace{-7em}| boolean} \\
                 & \text{\hspace{-7em}| () => Type} \\
                 & \text{\hspace{-7em}| ( Params ) => Type} \\

\\
\text{Block}  \to & \\ 
                  & \text{\hspace{-7em}\{\}} \\
                  & \text{\hspace{-7em}| \{ Statements \}} \\

\end{array}
\]
\end{minipage}
\hfill
\begin{minipage}[t]{0.45\textwidth}
\centering
\textbf{Expression Grammar} \\
\[
\begin{array}{ll}

\text{Exp}  \to & \\ 
                & \text{\hspace{-16em}P3\_Exp} \\
                & \text{\hspace{-16em}| P3\_Exp ? Exp : Exp} \\

\\
\text{P3\_Exp} \to & \\ 
                             & \text{\hspace{-16em}P2\_Exp} \\
                             & \text{\hspace{-16em}| P2\_Exp \&\& P3\_Exp} \\
                             & \text{\hspace{-16em}| P2\_Exp || P3\_Exp} \\

\\
\text{P2\_Exp}  \to & \\ 
                              & \text{\hspace{-16em}P1\_Exp} \\
                              & \text{\hspace{-16em}| P1\_Exp < P1\_Exp} \\
                              & \text{\hspace{-16em}| P1\_Exp == P1\_Exp} \\
                              & \text{\hspace{-16em}...} \\

\\
\text{P1\_Exp}  \to & \\ 
                              & \text{\hspace{-16em}Base\_Exp} \\
                              & \text{\hspace{-16em}| Base\_Exp + P1\_Exp} \\
                              & \text{\hspace{-16em}| Base\_Exp - P1\_Exp} \\
                              & \text{\hspace{-16em}...} \\

\\
\text{Base\_Exp}  \to & \\ 
                      & \text{\hspace{-16em}Literal} \\
                      & \text{\hspace{-16em}| Var} \\
                      & \text{\hspace{-16em}| ( Exp )} \\
                      & \text{\hspace{-16em}| Base\_Exp ( )} \\
                      & \text{\hspace{-16em}| Base\_Exp ( Args )} \\
                      & \text{\hspace{-16em}| - Base\_Exp} \\

\\
\text{Literal}  \to \text{INT | "true" | "false"} \\

\\
\text{Var}  \to \text{ID} \\

\\
\text{Args}  \to \text{Exp | Exp , Args} \\

\\
\text{Typed\_Id}  \to \text{Var : Type} \\

\\
\text{Params}  \to \text{Typed\_Id | Typed\_Id , Params} \\

\end{array}
\]
\end{minipage}
\caption{Our Subset of TypeScript. The start nonterminal is Statements, in the upper left.}\label{fig:typescript_grammar}
\end{figure}
\begin{figure}[ht]
\centering
\textbf{Inference Typing Rules for Expressions and Statements}
\begin{minipage}{0.4\textwidth}
\centering
\textbf{Expression Rules}
\begin{mathpar}
\inferrule[Int]
  {\;}{\Gamma \vdash \text{0} \implies \text{int}}
\end{mathpar}

\begin{mathpar}
\inferrule[True]
  {\;}{\Gamma \vdash \text{True} \implies \text{bool}}
\end{mathpar}

\begin{mathpar}
\inferrule[Var]
  {(x, \tau, \_) \in \Gamma}{\Gamma \vdash x \implies \tau}
\end{mathpar}

\begin{mathpar}
\inferrule[Sum]
  {\Gamma \vdash e1 \implies \text{int} \quad \Gamma \vdash e2 \implies \text{int}}
  {\Gamma \vdash e1 + e2 \implies \text{int}}
\end{mathpar}

\begin{mathpar}
\inferrule[Ternary Expression]
  {\Gamma \vdash e_1 \implies \text{bool} \quad \Gamma \vdash e_2 \implies \tau \quad \Gamma \vdash e_3 \implies \tau}{\Gamma \vdash (e_1 ? e_2 : e_3) \implies \tau}
\end{mathpar}

\begin{mathpar}
\inferrule[Function Application]
  {\Gamma \vdash f \implies \tau_1, \dots, \tau_n \to \tau \quad \forall j < n.~\Gamma \vdash x_j \implies \tau_j}{\Gamma \vdash f(x_1, \dots, x_n) \implies \tau}
\end{mathpar}
\end{minipage}
\caption{Selected Inference Rules for Typing Individual Expressions.} \label{fig:typescript_single_program_inference_rules_expressions}
\end{figure}
\begin{figure}[ht]
\centering
\textbf{Inference Typing Rules for Individual Statements}

\begin{minipage}{0.4\textwidth}
\centering
\textbf{Statement Rules}
\begin{mathpar}
\inferrule[Expression Statement]
  {\Gamma \vdash e \implies \tau' \quad \Gamma \vdash \bar{s} \implies \tau}{\Gamma \vdash e ; \bar{s} \implies \tau}
\end{mathpar}

\begin{mathpar}
\inferrule[Let Variable Declaration]
  {\Gamma \vdash e \implies \top \quad \Gamma + (x, \tau, \text{mutable}) \vdash \bar{s} \implies \tau'}{\Gamma \vdash \text{let} \ x : \tau = e ; \bar{s} \implies \tau'}
\end{mathpar}

\begin{mathpar}
\inferrule[Const Variable Declaration]
  {\Gamma \vdash e \implies \tau \quad \Gamma + (x, \tau, \text{immutable}) \vdash \bar{s} : \tau'}{\Gamma \vdash \text{const} \ x : \tau = e ; \bar{s} : \tau'}
\end{mathpar}

\begin{mathpar}
\inferrule[If-Then-Else -- May Not Return]
  {\Gamma \vdash e \implies \text{bool} \quad \Gamma \vdash s_1 \implies \tau \quad \Gamma \vdash s_2 \implies \tau \quad \Gamma \vdash \bar{s} \implies \tau}{\Gamma \vdash \text{if} \ e \ \text{then} \ s_1 \ \text{else} \ s_2 ; \bar{s} \implies \tau}
\end{mathpar}

\begin{mathpar}
\inferrule[If-Then-Else -- Definitely Returns]
  {\Gamma \vdash e \implies \text{bool} \quad \Gamma \vdash s_1 \implies \tau - () \quad \Gamma \vdash s_2 \implies \tau - () \quad \Gamma \vdash \bar{s} \implies \top}{\Gamma \vdash \text{if} \ e \ \text{then} \ s_1 \ \text{else} \ s_2 ; \bar{s} \implies \tau}
\end{mathpar}

\begin{mathpar}
\inferrule[While]
  {\Gamma \vdash e \implies \text{bool} \quad \Gamma \vdash s \implies \tau \quad \Gamma \vdash \bar{s} \implies \tau}{\Gamma \vdash \text{while} (e) \ s ; \bar{s} \implies \tau}
\end{mathpar}

\begin{mathpar}
\inferrule[No-op]
  {\;}{\Gamma \vdash \cdot \implies ()}
\end{mathpar}
\end{minipage}
\caption{Selected Typing Rules for Individual Statements. We give the typing rules of individual programs in our subset of TypeScript, eliding some trivial cases.
Note that $\bar{s}$ refers to a (possibly empty) sequence of statements.
We use $\cdot$ to denote the empty sequence of statements.
\changed{Note that statements type to their return types, as in~\cite{typescript_good_typesystem}.}} \label{fig:typescript_single_program_inference_rules_statements}
\end{figure}

\pagebreak
\clearpage
\subsubsection{Benchmarks and Additional Data}
\label{app:typescript_results}

\paragraph{Prompting}
All TypeScript experiments used the following context in instruct mode. The context is borrowed from \citet{typespaper} with the exception of the final 3 lines instructing the model to stay within our language subset.
\begin{verbatim}
You are an expert in typescript programming. 
Solve the given problem by writing solution code in typescript.
When answering, insert the solution code in a ```typescript...``` block.
Do NOT use arrays, strings, or lambdas!
Do NOT use console.log or add tests!
Do NOT throw errors!
\end{verbatim}

Each benchmark consists of a comment describing the task and the function signature, e.g.,

\begin{verbatim}
//Write a typescript function to identify non-prime numbers.
function is_not_prime(n: number): boolean 
\end{verbatim}

We use the comment as our prompt, and we use the signature as the prefix from which the LLM starts generating its solution, adding \"```typescript\" to the front. In our example, the prompt would be
\begin{verbatim}
//Write a typescript function to identify non-prime numbers.
\end{verbatim}

and the LLM would begin generation by appending to the string 

\begin{verbatim}
```typescript
function is_not_prime(n: number): boolean 
\end{verbatim}

\paragraph{Benchmarks}
We ran our experiments on the following 72 benchmarks from the MBPP~\cite{mbpp} benchmarks available in the MultiPL-E dataset~\cite{multiplE}:
\begin{verbatim}
mbpp_80_tetrahedral_number
mbpp_392_get_max_sum
mbpp_171_perimeter_pentagon
mbpp_127_multiply_int
mbpp_435_last_Digit
mbpp_287_square_Sum
mbpp_606_radian_degree
mbpp_803_is_perfect_square
mbpp_731_lateralsurface_cone
mbpp_581_surface_Area
mbpp_135_hexagonal_num
mbpp_17_square_perimeter
mbpp_77_is_Diff
mbpp_126_sum
mbpp_266_lateralsurface_cube
mbpp_797_sum_in_range
mbpp_3_is_not_prime
mbpp_458_rectangle_area
mbpp_441_surfacearea_cube
mbpp_162_sum_series
mbpp_448_cal_sum
mbpp_738_geometric_sum
mbpp_239_get_total_number_of_sequences
mbpp_59_is_octagonal
mbpp_638_wind_chill
mbpp_577_last_Digit_Factorial
mbpp_84_sequence
mbpp_724_power_base_sum
mbpp_641_is_nonagonal
mbpp_279_is_num_decagonal
mbpp_72_dif_Square
mbpp_781_count_divisors
mbpp_309_maximum
mbpp_14_find_Volume
mbpp_167_next_power_of_2
mbpp_600_is_Even
mbpp_742_area_tetrahedron
mbpp_432_median_trapezium
mbpp_234_volume_cube
mbpp_422_find_Average_Of_Cube
mbpp_292_find
mbpp_389_find_lucas
mbpp_227_min_of_three
mbpp_388_highest_Power_of_2
mbpp_271_even_Power_Sum
mbpp_67_bell_number
mbpp_274_even_binomial_Coeff_Sum
mbpp_86_centered_hexagonal_number
mbpp_574_surfacearea_cylinder
mbpp_430_parabola_directrix
mbpp_406_find_Parity
mbpp_605_prime_num
mbpp_264_dog_age
mbpp_770_odd_num_sum
mbpp_453_sumofFactors
mbpp_244_next_Perfect_Square
mbpp_93_power
mbpp_291_count_no_of_ways
mbpp_637_noprofit_noloss
mbpp_293_otherside_rightangle
mbpp_592_sum_Of_product
mbpp_256_count_Primes_nums
mbpp_479_first_Digit
mbpp_267_square_Sum
mbpp_58_opposite_Signs
mbpp_103_eulerian_num
mbpp_20_is_woodall
mbpp_96_divisor
mbpp_404_minimum
mbpp_752_jacobsthal_num
mbpp_765_is_polite
mbpp_801_test_three_equal
\end{verbatim}

\clearpage
\paragraph{Additional Results}
We include the total number of guesses required per token for each of our three generation modes in \Cref{fig:token_generation_appendix}.

\begin{figure}[t]
\centering
    \begin{subfigure}[b]{0.44\linewidth}
    \centering
        \includegraphics[width=0.97\linewidth]{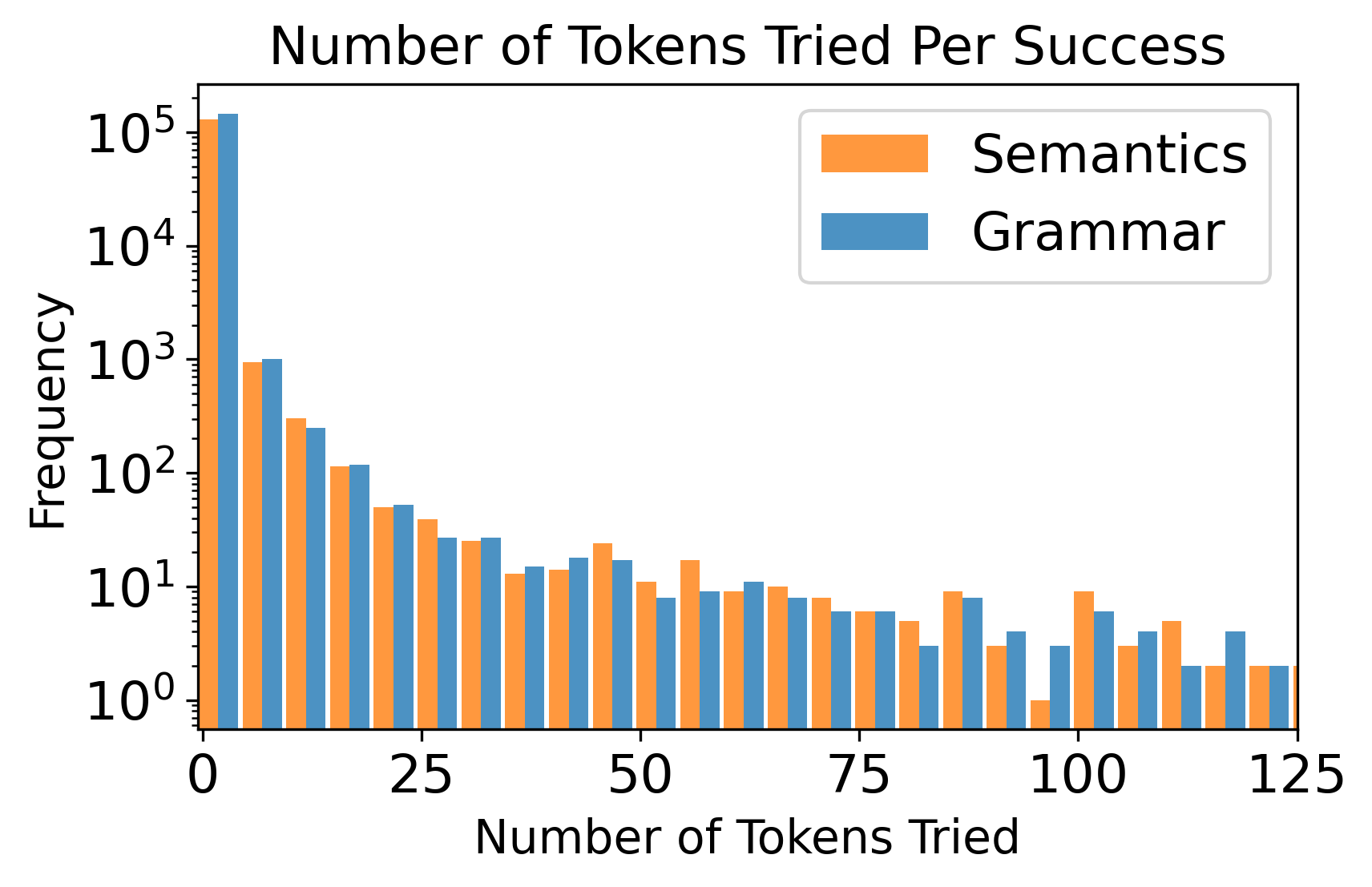}
        \caption{TypeScript, DeepSeek-Coder-6.7b, all temperatures}
    \end{subfigure}
    \hfill
    \begin{subfigure}[b]{0.44\linewidth}
    \centering
        \includegraphics[width=0.97\linewidth]{Images/tries_per_token_buckets_llama7b_typescript.png}
        \caption{TypeScript, CodeLlama7B, all temperatures}
    \end{subfigure}
    \hfill
    \begin{subfigure}[b]{0.44\linewidth}
    \centering
        \includegraphics[width=0.97\linewidth]{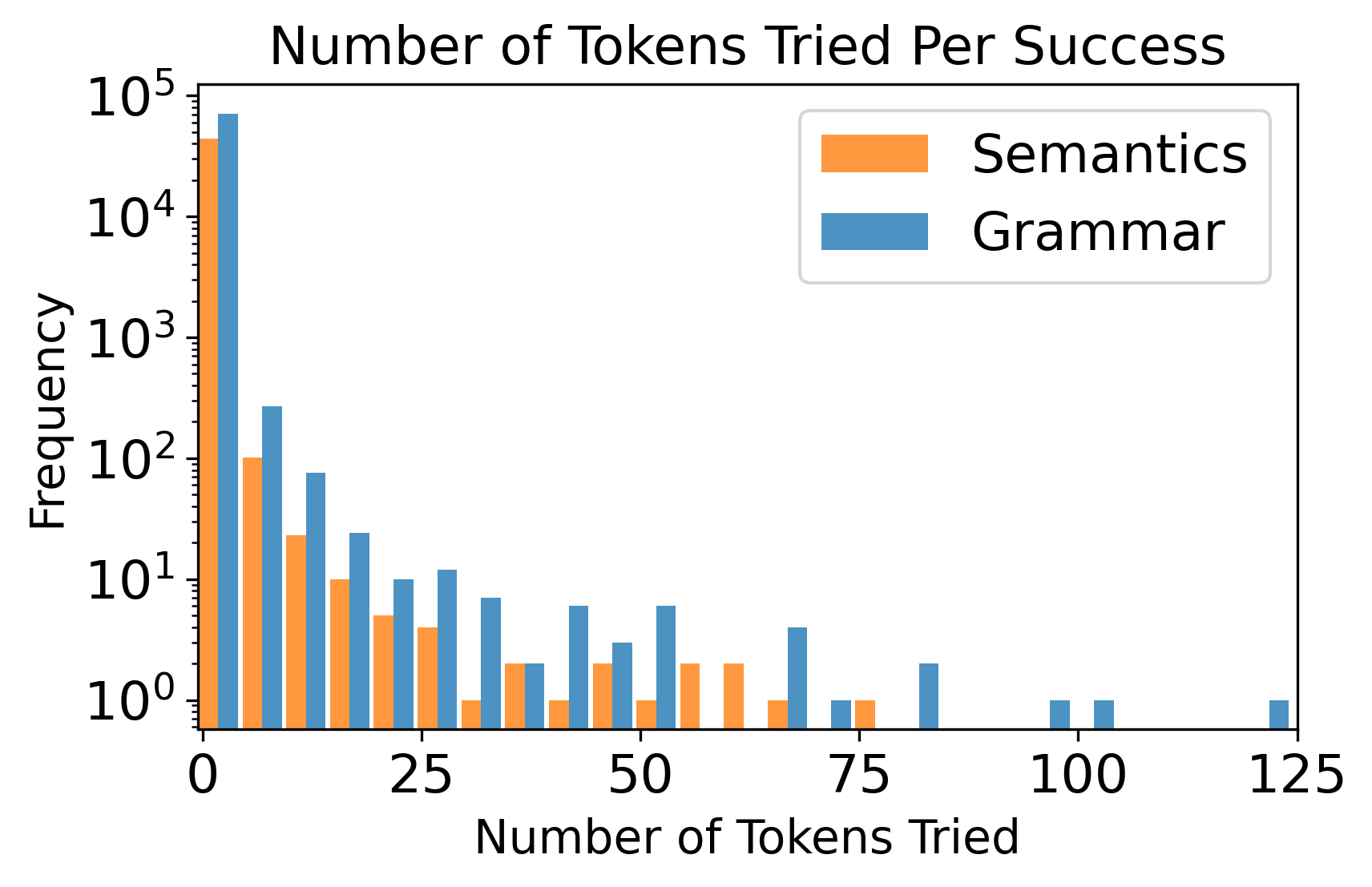}
        \caption{TypeScript, CodeLlama13B, all temperatures}
    \end{subfigure}
    \caption{Distribution of how many tokens were proven unrealizably by semantic constrained decoding to produce each individual token. The $k$th bar gives the number of successful tokens that were produced after trying between $5k$ and $5k+4$ unsuccesful tokens by CodeLlama-7b.}
    \label{fig:token_generation_appendix}
\end{figure}

\subsection{Equivalence-Guided Decoding}
\label{app:egraphs_results}
\subsubsection{Benchmarks and Additional Data.}
\ref{app:egraph-rules} shows a variation of our equivalence-guided decoding benchmarks where the rules were added to the prompt for the LM.

\begin{table}[t]
\centering
\caption{Equivalence-guided decoding benchmarks with rewrite rules added to prompts.}
\label{app:egraph-rules}
\resizebox{\textwidth}{!}{%
\begin{tabular}{@{}ll|ccccc:c|ccccc:c|ccccc:c@{}}
\toprule
 & & \multicolumn{6}{c|}{\textbf{DeepSeek-Coder-6.7b}} & \multicolumn{6}{c|}{\textbf{CodeLlama-7B}} & \multicolumn{6}{c}{\textbf{CodeLlama-13B}} \\
 & & \multicolumn{5}{c:}{\textbf{Temperature}} & \multirow{2}{*}{\textbf{Tot.}} & \multicolumn{5}{c:}{\textbf{Temperature}} & \multirow{2}{*}{\textbf{Tot.}}  & \multicolumn{5}{c:}{\textbf{Temperature}} & \multirow{2}{*}{\textbf{Tot.}}  \\
 & & 0.01 & 0.3 & 0.5 & 0.7 & 1.0 & & 0.01 & 0.3 & 0.5 & 0.7 & 1.0 & & 0.01 & 0.3 & 0.5 & 0.7 & 1.0 & \\
\midrule[1pt]
\multirow{3}{*}{\shortstack{\textbf{Equivalence}\\\textbf{Rules + No Delimit}\\\textbf{(10 programs)}}}
 & Unconstrained     & 0 & 0 & 0 & 0 & 0 & 0 & 0 & 0 & 0 & 0 & 0 & 0 & 0 & 0 & 0 & 0 & 0 & 0 \\
 & Grammar        & 0 & 0 & 0 & 0 & 0 & 0 & 0 & 0 & 0 & 0 & 0 & 0 & 0 & 0 & 0 & 0 & 0 & 0 \\
 & Semantic  & \textbf{4} & \textbf{3} & \textbf{4} & \textbf{6} & \textbf{3} & \textbf{20} & \textbf{8} & \textbf{6} & \textbf{10} & \textbf{6} & \textbf{4} & \textbf{34} & \textbf{9} & \textbf{10} & \textbf{10} & \textbf{7} & \textbf{6} & \textbf{42} \\[2pt]
\cdashline{1-20}
\multirow{3}{*}{\shortstack{\textbf{Equivalence}\\\textbf{Rules + Delimit}\\\textbf{(10 programs)}}}
 & Unconstrained     & 0 & 0 & 0 & 0 & 0 & 0 & 3 & 3 & 5 & 0 & 1 & 12 & 2 & 2 & 4 & 2 & 0 & 10 \\
 & Grammar        & 0 & 0 & 0 & 0 & 0 & 0 & 3 & 3 & 5 & 4 & 1 & 16 & 2 & 6 & 1 & 2 & 3 & 14 \\
 & Semantic  & \textbf{8} & \textbf{8} & \textbf{7} & \textbf{8} & \textbf{6} & \textbf{37} & \textbf{8} & \textbf{7} & \textbf{5} & \textbf{8} & \textbf{8} & \textbf{36} & \textbf{8} & \textbf{8} & \textbf{7} & \textbf{9} & \textbf{7} & \textbf{39} \\
\bottomrule
\end{tabular}
}
\end{table}

\paragraph{Context.} The equivalence benchmarks use the following context. The last line is removed for the NO-DELIMIT experiments.
\begin{verbatim}
You are a code refactoring assistant for a simple functional language. 
The language consists of expressions which are either identifiers, integers, 
basic arithmetic operations, function application, and let expressions. 
The only binary operators are +, -, *, and /. 
All other functions (for example, sqrt or pow) are named---
ONLY use names appearing in the original program.

As examples, syntactically valid programs would include:

```
let x = sqrt 42 in
let y = pow (f x) 2 in
y - 3
```

and

```
f x + g y
```

Your job is to refactor programs into *equivalent* ones which also
have clear, readable style using let bindings when helpful. 
Never introduce new features not in the language. 
Never include comments or explanations. 
ONLY output code, then IMMEDIATELY stop. 
Never redefine variables in the original program 
or that have already been defined.

Start and end your solution with a codeblock using ```.
\end{verbatim}

\paragraph{Benchmarks.} We show the 10 benchmark programs we used below.

\begin{verbatim}
1.  fetch_document (authorize_user_for_document (
    authenticate_user current_user web_request) document_id)

2.  sqrt (pow (x1 - x2) 2 + pow (y1 - y2) 2)

3.  pow 10 (-15) * (66743 * m1 * m2) / (pow r 2)

4.  add_watermark (apply_filter (
   crop_image original_image selection) filter_type) watermark_image

5.  start + (end - start) * scale

6.  (sum (filter positive xs)) / (length (filter positive xs))

7.  power / 1000 * hours * price_per_kwh

8.  (-b + sqrt ((pow b 2) - 4 * a * c)) / (2 * a)

9.  map toUpper (filter isAlpha s)

10. sqrt ((pow (a - ((a+b+c)/3)) 2) + 
    (pow (b - ((a+b+c)/3)) 2) + (pow (c - ((a+b+c)/3)) 2)) / 3
\end{verbatim}

\paragraph{Egglog file.} We show the Egglog file defining the rewrites for the initial e-graph below. It encodes basic arithmetic rules.

\begin{verbatim}
(datatype Math
  (Num i64)
  (Str String)
  (Var String)
  (Add Math Math)
  (Sub Math Math)
  (Neg Math)
  (Pow Math Math)
  (Sqrt Math)
  (Mul Math Math)
  (Div Math Math)
  (App Math Math))

(rewrite (Add a b)
         (Add b a))

(rewrite (Add (Num a) (Num b))
         (Num (+ a b)))

(rewrite (Add (Add a b) c)
         (Add a (Add b c)))

(rewrite (Neg a)
         (Sub (Num 0) a))

(rewrite (Sub (Num 0) a)
         (Neg a))

(rewrite (Sub a b)
         (Add a (Mul (Num -1) b)))

(rewrite (Sub (Num a) (Num b))
         (Num (- a b)))

(rewrite (Mul a b)
         (Mul b a))

(rewrite (Mul (Num a) (Num b))
         (Num (* a b)))

(rewrite (Mul (Mul a b) c)
         (Mul a (Mul b c)))

(rewrite (Mul a (Add b c))
         (Add (Mul a b) (Mul a c)))

(rewrite (Div a b)
         (Mul a (Div (Num 1) b)))

(rewrite (Mul a (Div (Num 1) b))
         (Div a b))

(rewrite (Div (Num 1) (Mul b c))
         (Mul (Div (Num 1) b) (Div (Num 1) c)))

(rewrite (Mul (Div (Num 1) b) (Div (Num 1) c))
         (Div (Num 1) (Mul b c)))
\end{verbatim}

\end{document}